\def\Re{\mathbf{R}}
\def\ta{\theta}
\def\phi{\varphi}
\newcommand{\df}[1]{\textit{#1}}
\def\me{\textsc{Me}\xspace}
\def\other{\textsc{Other}\xspace}
\newtheorem{proposition}{Proposition}
\newtheorem{example}{Example}
\pgfplotsset{compat=newest}
\definecolor{accentcolor}{RGB}{50, 70, 150}
\providecommand{\U}[1]{\protect\rule{.1in}{.1in}}
\begin{document}

\title{I Choose For You: an Experimental Study\thanks{We thank Ben Gillen for advice on econometric methodology.}}

\author{
Marina Agranov\thanks{Division of the Humanities and Social Sciences, Caltech and NBER.  Contact: \href{magranov@caltech.edu}{magranov@caltech.edu}} \hspace*{1.25em}
Federico Echenique\thanks{Department of Economics, UC Berkeley. Contact: \href{mailto:fede@econ.berkeley.edu}{fede@econ.berkeley.edu}} \hspace*{1.25em} Kota Saito\thanks{Division of the Humanities and Social Sciences, Caltech. Contact: \href{saito@caltech.edu}{saito@caltech.edu}}
}

\date{\today}

\maketitle
\begin{abstract}
    We investigate whether risk and time preferences differ when individuals make decisions for others compared to making decisions for themselves. We introduce a novel ``skin in the game'' experimental design, where choices for others incur a direct cost to the decision-maker, ensuring a genuine trade-off between self-interest and surrogate allocation. The modal outcome is that participants are more risk-averse and impatient when choosing for others than for themselves. Our methodology reveals significant heterogeneity, successfully identifying selfish types often missed by the more standard ``no skin in the game'' approaches. The message is nuanced, as even non-selfish participants behave differently when they have skin in the game. Furthermore, our framework yields more consistent behavior and superior out-of-sample predictive power.
\end{abstract}

\clearpage
\tableofcontents
\clearpage

\section{Introduction}

Economists rely heavily on understanding how individuals navigate trade-offs involving risk and time. From household savings decisions to firm investment choices, these two dimensions --- risk aversion and intertemporal substitution --- are fundamental to modeling decision-making. Typically, preferences over risk and time are elicited and analyzed under the assumption that the decision-maker is also the beneficiary of the outcome. We observe choices made by the self, for the self, and recover the underlying parameters that govern these behaviors.

A substantial portion of economic decision-making involves choices made on behalf of others. Financial advisors manage roughly \$30 trillion in U.S. household assets, making risk and timing decisions for clients; physicians select treatments for patients who defer to their expertise; and parents open investment accounts for children who cannot yet decide for themselves. In these surrogate settings, one person's preferences govern another person's consumption. Yet virtually all economic analysis of such surrogate relationships implicitly assumes that decision-makers apply the same risk and time preferences they would use for their own consumption. Whether this assumption is correct is far from obvious. This is what we do in this paper. We study whether the fundamental parameters of risk and time preference remain stable across the self–other divide, or whether the act of choosing for another induces systematically different behavior.

Specifically, we aim to recover the preferences that one subject holds over the consumption of another to determine whether there is a divergence in risk attitudes and impatience. The central empirical questions are whether individuals are more risk-averse when making choices for a third party compared to themselves and whether they exhibit different levels of impatience when scheduling consumption for others. Understanding this ``self-other wedge'' is critical for delegation and agency relationships, as it suggests that misalignments may arise not just from differing information or incentives, but also from fundamental differences in how we process risk and time for others. In other words, in analyzing choices made on behalf of others, we are focused on  misalignments that stem from differences in preferences and are more fundamental, or hard-wired, than the more common concern with conflicts of interest.

Our primary methodological contribution lies in the specific mechanism used to elicit these preferences. A standard experimental approach might attempt to study this wedge through what we call the ``no-skin-in-the-game'' design: a subject makes choices that determine the consumption of a passive recipient, but without affecting their own payoff. While this setting isolates the decision-maker’s pure preference for the other's welfare, it abstracts away from the material constraints and opportunity costs that characterize most real-world agency problems. Crucially, it is only incentivized to the extent that an experimental participant truly cares about the outcome consumed by others.

In contrast, we introduce a novel ``skin-in-the-game'' framework. In our design, the act of making choices on behalf of another is not costless; rather, it comes at the direct cost of reduced consumption for the decision-maker. This ensures that the decision-maker faces a genuine trade-off: providing a specific risk profile or time schedule for the other agent requires a sacrifice of their own resources. By linking the choices made for the beneficiary to a reduction in the decision-maker's own potential consumption, we move beyond a costless allocation. The design is explained in detail in Section~\ref{sec:expdesign}. Our approach allows us to observe how surrogate risk and time preferences manifest when the decision-maker is personally invested in the allocation process and must weigh the beneficiary's interests against their own material well-being.

Our experimental results show significant individual heterogeneity, but the most common pattern is that participants are more risk averse and more impatient with respect to others' consumption than they are with their own consumption. See Section~\ref{sec:results} for details. The skin-in-the-game methodology successfully identifies ``selfish types,'' which constitute a significant fraction of participants and allow for a more nuanced understanding of agents' preferences over others' consumption than the simple no-skin-in the-game approach. 

The comparison with the no-skin method yields some interesting conclusions. The skin in the game results are consistent across tasks and parameterizations, while the no-skin results are much less consistent. Our skin in the game estimations perform better when evaluated out of sample (one question is left out of the analysis and is used as a hold-out for predictions) than the same exercise using no-skin data. So, we conclude that the skin in the game method has some clear advantages compared to having no skin in the game. This finding is significant, as the relative attitudes towards one's own and others' consumption differ across the two methodologies, with the no-skin method underestimating how many participants are more risk averse for others than for themselves. It also underestimates how many participants have similar attitudes towards risk for themselves and others.

The difference in results between skin and no-skin cannot be explained by the mere presence of selfish agents who remain undetected in the no-skin treatment. Even non-selfish agents behave differently in the no-skin tasks compared to the skin in the game method. The incentives provided in the skin approach matter for the conclusions we draw about agents' choices on behalf of others. Given the apparent statistical benefits of the skin in the game results (stability across tasks and out-of-sample performance), we are inclined to put more weight on the conclusions from the skin, than the no-skin, treatments.

The remainder of the paper is organized as follows. We conclude this section with a review of the related literature. Section~\ref{sec:model} introduces the conceptual framework and outlines two approaches to studying how people make decisions for others: the skin-in-the-game and no-skin-in-the-game methods. Section~\ref{sec:expdesign} describes the experimental design, focusing on the identification strategy (Section~\ref{sec:Identification}) and the main features of the experimental protocol (Sections \ref{sec:Protocol} and \ref{sec:dissdesign}). Section~\ref{sec:results} presents the experimental results, beginning with non-parametric analyses (Sections~\ref{sec:SIGnonpar} and ~\ref{sec:comparenonpar}), followed by structural estimation (Section~\ref{sec:structural}) and out-of-sample predictions (Section~\ref{sec:predict}). 

\paragraph{Related literature}

The skin in the game design is inspired by the work of \cite{andreonimiller}, who developed the modified dictator game in which subjects share surplus with another participant and do so for a variety of costs of sharing and amounts available for sharing. \cite{andreonimiller} use this task to study the axioms of revealed preference in the domain of altruism. \cite{KarivEtAl} provide a graphical representation of the same task and recovers the underlying preferences for giving, taking special care to distinguish such preferences from the social preferences. \cite{KarivEtAl} find that about 25\% of people are selfish, which is in the same ballpark as our results. We extend this idea to allocating the available budget between inputs in a richer set of objects, such as the high prize in the lottery, consumed either by oneself or by others, or the delayed payments. 

Following what we have called the no-skin design,  \cite{chakravarty2011you} find that people are less risk averse when choosing on behalf of others than when they choose for themselves. The authors present two experimental designs. The first were multiple-price lists (MPLs) involving risky outcomes for the decision makers, and the second were identical MPLs for which the outcomes would be enjoyed by a randomly-drawn set of passive participants in the experiment. The subject pool consisted of management students in India. The paper documents a significantly more risk-averse behavior when choosing for self than when choosing for others. The second experimental design in the paper involves choices in first-price sealed-bid auctions and confirms qualitatively the findings from the first experiment.  To quote from the paper's conclusions ``\ldots individuals appear less risk averse when making decisions over  \other people's money, and some are actually risk loving. This general pattern is statistically significant. The difference does not appear to be driven by an attempt to pick risk attitudes that reflect the risk attitudes of others.''

\cite{reynolds2009risky} run an experiment in which each participant first chooses between a risky and a safe option for themselves; and then makes a similar choice for the benefit of a group of other people. They find that when choosing for themselves, participants are more likely to choose the risky option than when choosing on behalf of others. Thus they find, using a no-skin design, that people are more risk averse when choosing on behalf of others. 

\cite{harrison2012} consider an individual multiple-price list choice among lotteries, and a social outcome through voting. They find no significant difference in risk preferences between individual and social choices, unless the participants are provided with information about the level of risk aversion of the  other group members, in  which case they are more risk averse when choosing socially. The design is quite different from ours, though, in that the social treatment involves majority voting in groups. Interestingly, \cite{harrison2012} attribute the difference between their results and those of \cite{chakravarty2011you} to the lack of skin in the game in the latter design.

\cite{exley2016excusing} designs an experiment to understand the effect of risk on charitable giving. While the motivation and goals of her study are distinct from ours, her experimental design is probably the closest to our design. The core of Exley's experiment is a series of multiple price lists (MPL) that involve outcomes for the decision maker and for a charity. These outcomes can be certain monetary rewards or lotteries. The MPLs are arranged in four blocks. One block, which she calls ``Self Lottery vs. Charity-Certain Amount,'' is the closest to our design. The block involves two kinds of choices. First, binary choices between a lottery for the participant and a certain amount for the charity. This elicits a kind of certainty equivalent expressed in the participant's preferences for the charity's welfare. Second, a choice between a lottery for the charity and a certain amount for the charity. 

In her design, Exley varies the probability of winning in the lottery over a fine-graded scale (the set  $\{0.05, 0.10, 0.25, 0.50, 0.75, 0.90, 0.95\}$). The reason is that she is interested in the effect of uncertainty on selfishness. The main conclusion of the experiment is that risk acts as an excuse to be selfish. This is reflected in a larger aversion to donating to the charity when the lottery that the charity would obtain is subject to a larger degree of risk. While the MPL's in Exley's design are similar to ours, the use that she makes of these is orthogonal to our interest in comparing preferences over own and others' consumption.

Finally, \cite{chakraborty2025role} consider the allocation of resources among one participant who belongs to a similar social group than does the decision maker, and one who does not. The focus in their experiment is to understand the role of interpersonal uncertainty: the degree to which it is difficult for one person to assess the welfare of another. To this end, they construct alternative tasks that have no social component but that feature similar levels of uncertainty. We share with this paper the preoccupation for preferences on behalf of others, but the experimental designs, and overall goals of the two studies, are quite distinct.

\section{Model}\label{sec:model}

We first discuss our methodological proposal in an abstract setting. The methodology is broadly applicable to many different domains and questions in economics. The model primitives are $(\Theta,O,o,Q,\succeq)$ where:
\begin{itemize}
    \item $\Theta$ is a finite set of options.
    \item $O$ is a set of \df{outcomes}. 
    \item  $\succeq$ is a preference relation over $O$.
    \item $o:\Theta\times \Re_+ \to O$ is a function mapping each option $\ta$ and each \df{scale} $x\geq 0$ into an \df{outcome} $o(\ta,x)$.
    \item $Q$ is a finite set of \df{exchange rates}.
\end{itemize}

Consider two agents: \me and \other. The outcomes in $O$ will occur to (or be enjoyed by) \other. The preference $\succeq$ describes \me's choices over outcomes for \other. Suppose that \me has a strictly increasing and continuous utility $v:\Re_+\to\Re$ over \me's own consumption of money. Suppose also that $\succeq$ is represented by a utility function $w:O\to \Re$.

\begin{example}\label{ex:EU}
    Let $O$ be the set of all simple lotteries on some given set of monetary prizes: $O=\Delta_s(\Re)$ is the set of probability distributions on $\Re$ with finite support. Let $\Theta$ be a set of options, $\theta_p$ with $p=0.1,0.2,0.3$. The function $o$ assigns, to each $\theta_p$ and scalar $x\geq 0$, the lottery in $O$ that pays $x$ with probability $p$ and $0$ with probability $1-p$.

    Suppose that the utility function $w$ over lotteries in $O$ takes the expected utility form. For example, $w(\ell_p) = pu(x)+(1-p)u(0)$, when $\ell_p\in O$ is the lottery that pays $x$ with probability $p$ and $0$ with probability $1-p$. The function $u$ may, in addition, be assumed to be of the constant relative risk aversion form. So we get that $w(\ell_p) = Ap x^\gamma$, for some parameters $A>0$ and $\gamma >0$.

  \end{example}
How can we elicit, or estimate, the preferences $\succeq$ and the utility $w$? The standard procedure in economics is to ask \me to make choices in $O$ that will be consumed by \other. Essentially, preferences are choices. In our paper, we propose a different approach. It is also rooted in the notion that preferences are choices, but we offer a different channel for experimentally incentivizing those choices.

We shall impose the following assumptions on the model primitives:
\begin{enumerate}
    \item Separability: $o(\ta,x)\succeq o(\ta',x)$ if and only if $o(\ta,x')\succeq o(\ta',x')$ for any $\ta,\ta'\in\Theta$ and $x,x'>0$.
    \item $x\mapsto w(o(\ta,x))$ is a strictly monotonically increasing and continuous function.
    \item $v(1-x)\geq \max\{w(\ta,q x):\ta\in \Theta,q\in Q \}$ for all $x$ small enough, and $v(1-x)\leq \min\{w(\ta,q x):\ta\in \Theta,q\in Q \}$ for all $x$ large enough.
\end{enumerate}

Separability assumptions are extremely common in economics. We use them to define a preference on some relevant sub-domain: for example, the separation of tastes from beliefs requires separability assumptions, as does the separation of time preferences from consumption within a period. Here we impose separability in order to define a preference $\succeq^*$ over $\Theta$ by $\ta\succeq^* \ta'$ if and only if $o(\ta,x)\succeq o(\ta',x)$ for some $x>0$. We may interpret the preferences $\succeq^*$ as representing the choices that \me would make for \other to consume when the choice involves alternatives in $\Theta$.

We now discuss two approaches to recover \me's preferences over \other's consumption. The first approach, ``no skin in the game,'' is the standard procedure in the literature. The second approach, ``skin in the game,'' is our main proposal. It is important to understand the differences between the two, not only because it matters for our methodological contribution, but also for our experimental results. Our experimental design includes tasks given by both methodologies, and we are interested in the empirical results obtained under each of the two approaches.

\subsection{No Skin In The Game Method}\label{sc:modelnoSIG}

The most common experimental design asks participants to choose  between two alternatives, $o(\ta,x)$ and $o(\ta',x)$. We call this design, no-skin-in-the-game method.

In principle, such a design allows us to recover the preferences that \me has over \other's consumption. These would be the preferences $\succeq^*$ in our previous discussion. However, this method is only incentivized to the extent that \me truly cares about \other's consumption. The choices made by \me are physically paid out, and enjoyed, by \other. There is no ``physical'' connection between \other's consumption and \me's material consumption, hence the name, no-skin-in-the-game.

\subsection{Skin In The Game Method}\label{sc:modelSIG}

Our main methodological proposal is to  make \me's choices be costly for \me. Our proposal is inspired by \cite{andreonimiller}, who allow for a tradeoff between \me and \other's consumption at varying terms of trade. In our model, these terms of trade are given by the exchange rates $q\in Q$.
We ask \me to choose between \me receiving an amount of money or giving money that scales an outcome that \other gets. The exchange rate $q$ multiplies how goes into the scale ($x$) of the outcome that \other receives. 

The total amount of money is $m$, which we assume for now is $m=1$. Given $x$, then, \me chooses between getting $m-x$ or giving \other the outcome $o(\ta,q x).$ Under the assumptions we have made, the \df{cutoff} $x^*(\ta,q)$ is well defined by
\[ 
v(1-x^*(\ta,q)) = w ( o(\ta,q x^*(\ta,q)) )
\] The cutoff $x^*(\ta,q)$ is the monetary amount that equates \me's utility from \me consuming $1-x^*(\ta,q)$ with the utility that \me obtains from \other consuming the object $o(\ta,q x^*(\ta,q))$. This object could be a lottery, as in Example~\ref{ex:EU}, a dated reward, or other consumption items. 

The assumptions we have made are minimal, but they suffice for the method we have outlined to enable recovery of \me's preference for consumption by \other.

\begin{proposition}\label{prop:basic}
$\ta\succeq^* \ta'$ if and only if $x^*(\ta,q)\leq x^*(\ta',q)$.
\end{proposition}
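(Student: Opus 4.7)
The plan is to exploit the strict monotonicity built into assumptions 1--3 to turn the cutoff comparison into a comparison of $w(o(\cdot,\cdot))$ evaluated at a single common scale, after which separability closes the argument.

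First I would verify that the cutoff is well defined. The map $x\mapsto v(1-x)$ is strictly decreasing and continuous (since $v$ is strictly increasing and continuous), while $x\mapsto w(o(\ta,qx))$ is strictly increasing and continuous by assumption 2 (treating exchange rates $q>0$). The boundary inequalities in assumption 3 let the intermediate value theorem furnish a crossing, and strict monotonicity makes it unique, so the implicit equation determining $x^*(\ta,q)$ has exactly one solution.

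Next I would introduce the auxiliary function
\[ f_\ta(x) \,:=\, w(o(\ta,qx)) - v(1-x), \]
which is strictly increasing in $x$ and vanishes precisely at $x^*(\ta,q)$. Consequently $x^*(\ta,q)\leq x^*(\ta',q)$ if and only if $f_\ta(x^*(\ta',q))\geq 0$. Using the definition $v(1-x^*(\ta',q))=w(o(\ta',q x^*(\ta',q)))$, this inequality rewrites as
\[ w\bigl(o(\ta,q x^*(\ta',q))\bigr) \,\geq\, w\bigl(o(\ta',q x^*(\ta',q))\bigr). \]
Setting $x_0:=q x^*(\ta',q)>0$, that is exactly $o(\ta,x_0)\succeq o(\ta',x_0)$. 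Separability (assumption 1) then gives the equivalence of $o(\ta,x_0)\succeq o(\ta',x_0)$ with $o(\ta,x)\succeq o(\ta',x)$ for every $x>0$, which is the definition of $\ta\succeq^*\ta'$. Chaining these equivalences proves both directions simultaneously.

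The proof is essentially routine once the cutoff is known to exist; the only step requiring real care is choosing the common scale at which to invoke separability. The natural choice is $x_0=q x^*(\ta',q)$ rather than $q x^*(\ta,q)$, because this is the scale at which $v(1-\cdot)$ can be replaced by $w(o(\ta',q\cdot))$ via the defining equation of $x^*(\ta',q)$; doing so collapses the cutoff comparison into a $w$-comparison at a single argument. One should also confirm $x_0>0$, which follows from assumption 3 forcing $x^*(\ta',q)$ into the interior of the range where the two monotone functions meet.
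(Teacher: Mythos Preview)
Your proof is correct and follows essentially the same approach as the paper's: both arguments define a strictly monotone auxiliary function whose unique zero is the cutoff, then evaluate it at one of the two cutoffs to reduce the inequality $x^*(\ta,q)\le x^*(\ta',q)$ to a comparison of $w(o(\ta,\cdot))$ and $w(o(\ta',\cdot))$ at a common positive scale, where separability finishes. The only cosmetic difference is that the paper anchors at $x^*(\ta,q)$ while you anchor at $x^*(\ta',q)$; the two choices are mirror images and equally valid.
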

\begin{proof}
Let $x=x^*(\ta,q)$. Then  $\ta\succeq^* \ta'$ if and only if
\[
0= v(1-x) -  w ( o(\ta,q x) )
\leq v(1-x) -  w ( o(\ta',q x) ).
\] So $x\leq x^*(\ta',q)$ as  $x\mapsto v(1-x) -  w ( o(\ta',q x) )$ is strictly decreasing.
\end{proof}

The skin-in-the-game method provides a direct connection between \me's choice of \other's consumption and \me's own consumption. Choosing an amount $x$ means two things: it means that \other consumes the object $o(\ta,qx)\in O$ and that \me consumes the monetary amount $1-x$.

Note the role played by the cutoff $x^*(\ta,q)$ in Proposition~\ref{prop:basic}. Operationally, in our experimental design, we shall use a multiple price list (MPL) to elicit a point of indifference between consumption for \me and for \other. 

\section{Experimental Design}\label{sec:expdesign}
We designed an experiment that elicits both ordinal and cardinal comparisons of participants’ preferences over outcomes consumed by themselves (\me) and by another individual (\other), allowing us to compare the SIG and no-SIG methods. As shown below, our methodology is broadly applicable. In this paper, we illustrate it in two domains that are central to the literature: risk and time preferences.

We begin by describing the types of questions our method uses and the nonparametric identification strategy for comparing preferences over outcomes for \me and  \other (Section~\ref{sec:Identification}). We then present a detailed overview of the experimental protocol (Section~\ref{sec:Protocol}) and discuss its key design features (Section~\ref{sec:dissdesign}).

\subsection{Ordinal Comparison of Preferences: Identification}\label{sec:Identification} 

Our \textbf{skin-in-the-game} method (\textbf{SIG}) centers on three core, incentive-compatible questions that pin down how people value objects consumed by either \me or \other. We recover these valuations from an indifference condition: the point at which the utility \me gets from \other consuming an object exactly matches the utility \me gets from \me consuming a complementary object. In this way, \me's utility from \other's consumption becomes directly comparable -- and commensurate -- to \me's own consumption utility.

Let $v(o)$ and $w(o)$ denote the \me's utility for outcome $o$ when consumed by \me and  \other, respectively. In the risk domain, outcomes $o \in O$ are monetary lotteries; in the time domain, they are consumption bundles specifying monetary payments delivered at different dates. For convenience, we slightly abuse notation and write $v(x,t)$ and $w(x,t)$ for the Bernoulli utility of \me for receiving a payment of \$$x$ in $t$ days by \me and  \other, respectively. For example, $v(5,0)$ denotes \me's utility of \me consuming \$5 today, while $w(5,14)$ denotes \me's utility of  \other consuming \$5 in 14 days.  

The three core questions are valuation tasks, implemented via a multiple-price list (see Section \ref{sec:Protocol}). These questions elicit three indifference points—$(x^{\ast}, y^{\ast}, z^{\ast})$—in each domain, as specified below:

\begin{itemize}
\item Risk domain
\begin{itemize}
        \item[(Q1)] $v(K-x^{\ast},0) = p \cdot v(a(x^{\ast}+c),0) + (1-p) \cdot v(0,0)$
        \item[(Q2)]$w(q(K-x^{\ast}),0) =v(y^{\ast},0)$
        \item[(Q3)] $p \cdot w(qa(x^{\ast}+c),0) + (1-p) \cdot w(0,0) = v(z^{\ast},0)$ 
\end{itemize}
\item Time domain
    \begin{itemize}
        \item[(Q1)] $v(K-x^{\ast},0) = v(a(x^{\ast}+c),t)$
        \item[(Q2)] $w(q(K-x^{\ast}),0) =v(y^{\ast},0)$
        \item[(Q3)] $ w(qa(x^{\ast}+c),t)  = v(z^{\ast},0)$ 
    \end{itemize}
\end{itemize}

The parameters $p$ and $t$ specify the main features of the studied objects: $p$ is the probability of a high prize in the risk domain, while $t$ is the delay in the time domain. The parameter $q$ scales dollar payments to \me when the same payment is received by  \other. Think of $q$ as a price. It is the exchange rate between consumption by \me and by  \other, as in \cite{andreonimiller}.\footnote{\cite{andreonimiller} vary $q$ substantially across different treatments. We do not. This reflects an important difference between their objectives and ours. The wanted to test for rationality, which requires looking at the response to price variation. We instead choose $q$ so as to make the tradeoff between \me and \other's consumption be meaningful.} Finally, the remaining parameters $(K, a, c)$ are constants calibrated to target empirically relevant regions of preferences. All parameters are fixed in advance and known to the participants. We report these parameters and discuss their role in Section \ref{sec:Protocol}.

In the risk domain, the first question, Q1, elicits the amount $x^{\ast}$ for which \me is indifferent between receiving $K-x^{\ast}$ for sure and a lottery paying $a(x^{\ast} + c)$ with probability $p$ and zero otherwise. The next two questions, Q2 and Q3, elicit $y^{\ast}$ and $z^{\ast}$, which capture \me's valuation of  \other receiving the same fixed amount $K-x^{\ast}$ and the same lottery, respectively, where other's payments are scaled by $q$.

The three values $(x^{\ast},y^{\ast},z^{\ast})$ suffice to compare \me's risk preferences for outcomes consumed by \me versus  \other. If $y^{\ast} < z^{\ast}$ then \me is more risk-averse for lotteries consumed by \me than for  \other. This follows directly from the monotonicity assumption: $y^{\ast} < z^{\ast}$ implies that $v(y^{\ast},0)<v(z^{\ast},0)$, which in turn means that \me's certainty equivalent for consuming the lottery paying $a(x^{\ast}+c)$ with probability $p$ (and zero otherwise) is smaller than \me's certainty equivalent for  \other consuming the same lottery. If, however, $y^{\ast} > z^{\ast}$ then \me is less risk-averse for lotteries consumed by \me than for  \other.

Similarly, in the time domain, Q1 elicits the amount $x^{\ast}$ for which \me is indifferent between receiving $K-x^{\ast}$ today and $a(x^{\ast} + c)$ in $t$ days. Questions Q2 and Q3 then elicit $y^{\ast}$ and $z^{\ast}$, which capture \me's valuation of  \other receiving the same amounts today and in $t$ days, respectively. As always, both payments are scaled by the exchange rate $q$. If $y^{\ast} < z^{\ast}$, then \me is more impatient for delays in own payments compared to other's payments, since $y^{\ast} < z^{\ast}$ implies that $v(y^{\ast},0)<v(z^{\ast},0)$, which in turn indicates that the certainty equivalent of \me receiving $a(x^{\ast}+c)$ in $t$ days is smaller than certainty equivalent of  \other receiving the same delayed amount scaled by $q$. Conversely, if $y^{\ast} > z^{\ast}$, \me is more patient for delays in own payments than in other's payments.

The \textbf{no-skin-in-the-game} method, the \textbf{no-SIG} hereafter, instead relies on the comparison of the values elicited using Q1 (specified above) and one additional question:
\begin{itemize}
\item Risk domain
\begin{itemize}
        \item[(Q4)] $p \cdot w(a(x^{\ast}+c),0) + (1-p) \cdot w(0,0) = w(m^{\ast},0)$ 
\end{itemize}
\item Time domain
    \begin{itemize}
        \item[(Q4)] $ w(a(x^{\ast}+c),t)  = w(m^{\ast},0)$ 
    \end{itemize}
\end{itemize}

In the risk domain, Q4 elicits the certainty equivalent of the same lottery as in Q1, but consumed by  \other and expressed in terms of \other's sure amount $m^{\ast}$. An important feature of Q4 is that the decision-maker (\me) focuses entirely on choosing outcomes for  \other, without any reference to their own outcomes. This differs from Q2 and Q3, where \me evaluates objects for  \other relative to objects consumed by \me. Using $(x^{\ast},m^{\ast})$, we can compare the ordinal risk preferences of \me for outcomes consumed by \me versus \other: if $K-x^{\ast}<m^{\ast}$, then \me is more risk-averse for lotteries consumed by \me than for those consumed by  \other, since the certainty equivalent of the same lottery in \me's sure amount is smaller than in \other's sure amount. Similarly, in the time domain, $K-x^{\ast}<m^{\ast}$ indicates that \me is more impatient for delays in their own payments than in \other's payments.

In addition to the questions Q1-Q4 we have described above, our experiments included one more question:
\begin{itemize}
\item Risk domain
\begin{itemize}
        \item[(Q5)] $v(K-s^{\ast},0) = p \cdot w(qa(s^{\ast}+c),0) + (1-p) \cdot w(0,0)$
\end{itemize}
\item Time domain
    \begin{itemize}
        \item[(Q5)] $v(K-s^{\ast},0) = w(qa(s^{\ast}+c),t)$ 
    \end{itemize}
\end{itemize}
 
In the risk domain, Q5 elicits the amount $s^{\ast}$ that makes \me indifferent between receiving $K-s^{\ast}$ for sure and \other receiving the lottery that pays 
$a(s^{\ast}+c)$ with probability $p$. In the time domain, Q5 elicits the amount $s^{\ast}$ that makes \me indifferent between receiving $K-s^{\ast}$ today and the  \other person receiving $a(s^{\ast}+c)$ in t days. Q5 was primarily used for the structural estimation of preference parameters, as discussed in Section~\ref{sec:structural}.

\subsection{Experimental Protocol}\label{sec:Protocol}

The experiment consisted of 20 rounds and 20 MPLs, one per round. Each MPL consists of a sequence of ``left'' and ``right'' option. In each row of the MPL, the participant has to choose between the left and the right options. These are structured so as to find a point of indifference between the left and right options. At the end of the experiment we included several control tasks. We summarize the main features of the experiment. The Online Appendix features the instructions that were given to our participants, as well as screenshots from the experimental interface.

\paragraph{Valuation tasks.} MPLs were used to elicit the indifference points  $(x^{\ast},y^{\ast},z^{\ast},m^{\ast},s^{\ast})$. Each round featured one MPL, which varied across two dimensions: (i) whether the left option remained fixed (a standard MPL) or varied with the right option (a modified MPL), and (ii) the identity of the recipient of the left and right outcomes (\me–\me, \other–\me, \me–\other, \other–\other).

A \textbf{standard MPL} consists of a sequence of rows, each displaying two options (left and right). The left option remains fixed across rows, while the right option changes, becoming progressively more attractive as one moves down the list. An illustration is provided in the left panel of Figure~\ref{fig:MPLs}, which shows question Q2 in the risk domain. Subjects were asked to indicate the point at which they switched from preferring one option to the other, thereby enforcing a single switching point consistent with our theoretical framework, in which individuals reveal a unique indifference point.\footnote{This is a standard experimental technique used most recently in \cite{NielsenRehbeck} and \cite{McGranaghanEtAl2024}.} To implement this, clicking the left option in a given row automatically selected the left option for that row and all rows above it, and the right option for all rows below; clicking the right option worked analogously. Participants could revise their selection at any time before submitting their final choice for each list.

A \textbf{modified MPL} also consists of a sequence of rows with two options in each; but in this case both the left and right options vary across rows: the left option becomes progressively less attractive, while the right option becomes progressively more attractive. An illustration is provided in the right panel of Figure \ref{fig:MPLs}, which shows question Q1 in the time domain. As with the standard MPL, subjects were asked to indicate the point at which they switched from preferring one option to the  \other, thereby enforcing a single switching point consistent with our theoretical framework.

The modified MPL is meant to mimic the design in \cite{andreonimiller}. In their design, as participants evaluate two choices on a budget line, they are simultaneously considering an increase in \me's consumption and a decrease in \other's. This tradeoff is obtained in our modified MPL by making an option for \me more attractive at the same time as the outcome for \other becomes worse. 

\begin{figure}
    \begin{center}
    \caption{Two types of multiple-price lists}
    \label{fig:MPLs}
    {\footnotesize standard MPL} \quad \quad \quad \quad \quad \quad \quad \quad  \quad \quad \quad {\footnotesize modified MPL} \\
    \includegraphics[scale=0.35]{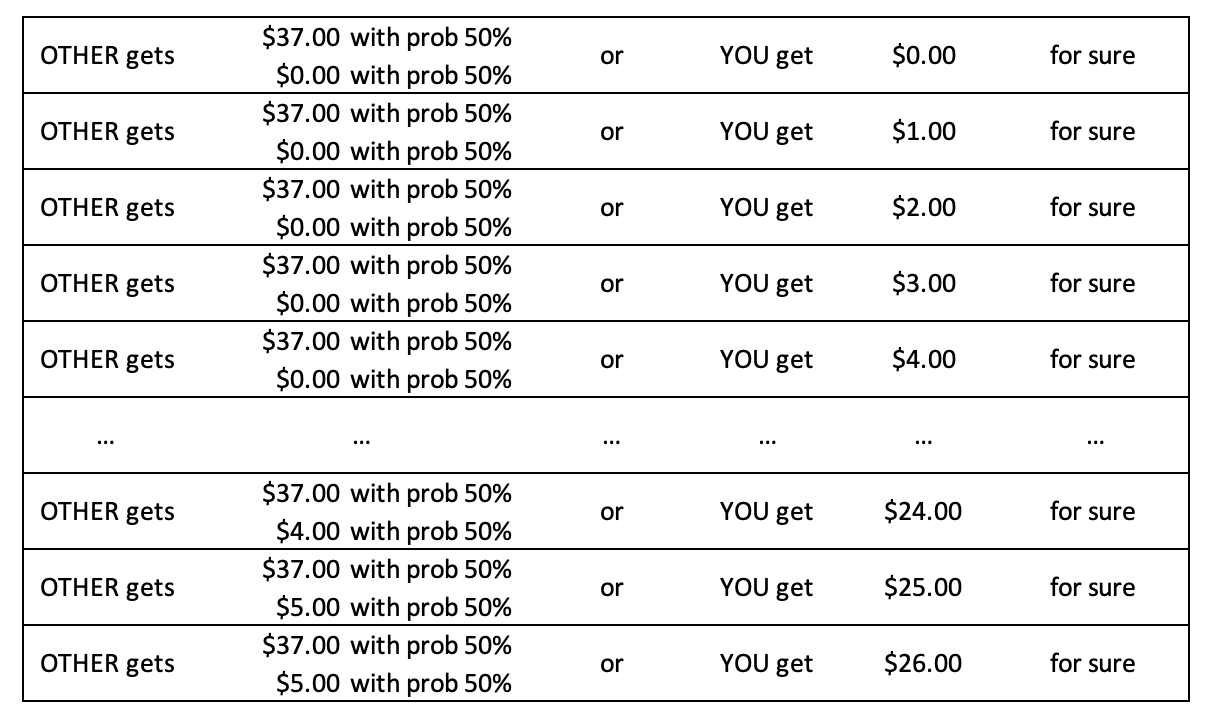} \quad
    \includegraphics[scale=0.53]{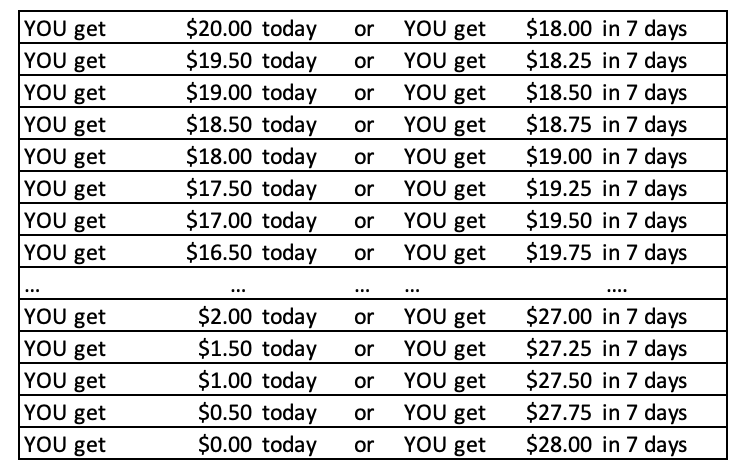}
    \end{center}
    
    {\footnotesize \underline{Notes:} The left panel presents Q2 question for risk preferences elicitation with $p=0.50$. The right panel presents question Q1 for time preferences with $t = 7$ days.}
\end{figure}

\paragraph{Parameters.} The five questions described in Section \ref{sec:Identification} were administered under two parameterizations in each domain. For risk preferences, we used $p =0.50$ and $p= 0.80$. For time preferences, we used $t=7$ and $t=14$ days. In consequence, we obtain $5\times 2 + 5\times 2=20$ rounds. The remaining parameters $(q,K,a,c)$ were set to $(\frac{3}{2},18,\frac{1}{2},28)$ and $(2,18,\frac{1}{2},31)$ for risk questions with $p=0.50$ and $p=0.80$, respectively and to $(\frac{3}{2},20,\frac{1}{2},36)$ and $(2,20,\frac{1}{2},30)$ for time questions with $t=7$ and $t=14$ days, respectively. In the risk questions, the steps between MPL rows were \$1, while in the time questions, it was \$0.50. These parameters were calibrated to capture the relevant preferences regions as established in the vast previous experimental literature that studies time and risk preferences and to minimize the number of dominated choices in each MPL. Table \ref{tab:parameters} in the appendix summarizes all the questions that appeared in the experiment.\footnote{Our experimental design is static in the sense that all decisions are made at a single date. It would be interesting to study behavior in a dynamic choice setting, as in \cite{Halevy2015TimeConsistency}.}

\paragraph{Structure of the Experiment.}The rounds were grouped into four blocks: block A included 4 modified MPL \me-\me questions (Q1); block B included 8 standard MPL \other–\me questions (Q2 and Q3); block C included 4 standard MPL \other–\other questions (Q4); and block D included 4 modified MPL \me–\other questions (Q5). To control for order effects, we used a quasi-randomized block order, subject to the constraint that block A always preceded blocks C and D, since the latter blocks relied on values elicited in block A. After every four questions, we implemented brain breaks to reduce fatigue. During the brain breaks, participants were asked to find the distinction between two pictures depicting animals.

\paragraph{Control tasks.} At the end of the experiment, participants completed a demographic questionnaire stating their age, gender, education, income, number of children and siblings, and political orientation on the liberal-conservative spectrum. Participants also answered two additional questions: one on their financial risk-taking behavior (``What percentage of your retirement savings is invested in risky assets such as stocks or mutual funds?'') and one on prosocial behavior (``How often do you give money to charities?''). 

\paragraph{Incentives.} Participants received a \$4 completion payment. In addition, 20\% of participants were randomly selected for a bonus payment: one row from one of their MPL decisions was randomly implemented for payment. \other recipients were recruited in separate sessions, and bonus payments for choices affecting them were paid accordingly. The study lasted on average 25 minutes. The average total earnings per participant were approximately \$9. This average includes payment for both the \me and the  \other participant, depending on which question was chosen for payment, and the choice made by the participant in this question.

\paragraph{Sample and Recruitment.} The experiment was programmed and administered via Prolific in March 2025, with recruitment restricted to U.S.-based adults. A total of 300 participants completed the study. We excluded 33 participants who made two or more dominated choices—indicative of inattentiveness—leaving a final analysis sample of 267 participants.\footnote{In addition, the completion time for inattentive subjects is significantly shorter than for attentive subjects ($p<0.01$). Inattentive subjects also tend to click only once on the first row of the MPL, allowing them to move quickly through the questions. }

\paragraph{Ethics and Pre-registration.} The study was approved by the Caltech (IRB protocol IR21-1167) and Berkeley Institutional Review Boards (CPHS protocol 2023-05-16357), and was pre-registered on AsPredicted (\#130714).

\subsection{Discussion of Experimental Design}\label{sec:dissdesign}

Two features of our methodology are worth discussing. First, while we describe the approach commonly used in the literature as a no-SIG method, we acknowledge that if the participant (\me) cares about the \other's consumption, then this approach does in fact provide incentives: the participant is choosing between objects that will be consumed by the  \other. However, if the participant is purely selfish and derives no utility from the \other's consumption, then no such incentives exist. By contrast, our SIG approach avoids this problem entirely. It features built-in incentives regardless of whether the participant values the \other's consumption because objects consumed by the  \other are always evaluated in terms of the participant’s own consumption. We view this as a key advantage of our method.

Second, a potential concern with our design is that participants’ answers in Q1 determine the payments used in Q2 and Q3. This raises the possibility that subjects might strategically manipulate their responses in Q1 to influence later choices, thereby threatening incentive compatibility. We believe that this concern is largely mitigated in our setting for several reasons. First, it is highly unlikely that participants could detect the link between Q1 and Q2/Q3; much less learn how to manipulate the interface to their advantage. Participants faced multiple parameterizations of the core questions in randomized order, and the indifference points $x^{\ast}$ elicited in Q1 were transformed into amounts in Q2 and Q3 through algebraic manipulations with constants $(q,a,c,K)$, effectively obscuring the connection between the different questions. Second, as we show in the results section, the recovered distributions of risk and time preferences for \me are similar to what has been found in the literature. This suggests that the answer to Q1 were not manipulated. Third, under the standard ``isolation'' assumption---that participants treat each question as independent---incentive compatibility is preserved. (For empirical support of the isolation assumption, see \cite{KahnemanTversky1979} and \cite{HeyLeea,HeyLeeb}.) For a detailed theoretical discussion of the assumptions underlying incentive compatibility in different payment mechanisms, see \cite{AzrieliEtAl}. Most relevant to our design is their result on the random problem selection mechanism: paying subjects for one randomly selected question is incentive compatible provided preferences satisfy a (state-wise) monotonicity condition.\footnote{If the choice objects are lotteries, then the relevant preferences are over compound lotteries defined by the payment mechanism.}

Our paper is not the first to employ a dynamic design. For further discussion of gaming concerns, see two recent papers that implement alternative versions of dynamic designs: the dynamically optimized sequential experimentation (DOSE) method by \cite{ChapmanEtAl} and the Bayesian rapid optimal adaptive design (BROAD) by \cite{RayEtAl}. Both papers provide evidence that gaming behavior is quite rare and unlikely to undermine preference elicitation.\footnote{There exists an alternative method that can, in principle, eliminate gaming concerns. Applied to our setting, this would require eliciting Q2 and Q3 for all possible values of $x^{\ast}$ from Q1. This would, however, require an infeasible number of questions. \cite{JohnsonEtAl} suggest a workaround for this problem by defining the grand set of such questions, but administering only a subset in the experiment. For payment, one question is randomly drawn from the grand set. If the selected question was part of the administered subset, payment is based on the subject’s original response; if not, the question is posed as an additional one. We do not expect our results to change under this alternative procedure, and therefore leave this as a robustness check for future work.}

\section{Results}\label{sec:results}

This section is organized as follows. We begin with a broad overview of our main findings. The subsequent subsections present the analyses deriving these results. Section \ref{sec:selfish} describes the identification of selfish types. Section \ref{sec:SIGnonpar} presents a non-parametric analysis based on the SIG method, while Section \ref{sec:comparenonpar} compares non-parametric results obtained using the SIG and no-SIG methods. Section \ref{sec:structural} introduces a structural parametric model of preferences and estimates its parameters under both methods. Finally, Section \ref{sec:predict} evaluates the relative performance of the two approaches through an out-of-sample prediction exercise using the estimates from Section \ref{sec:structural}.

\paragraph{Heterogeneity and non-parametric comparisons.}

The application of the SIG method reveals substantial individual heterogeneity in how decision-makers distinguish between their own preferences and those they hold for others. The population is  divided into three groups: those who are more risk averse with respect to \other's than over \me's consumption constitute the largest group; those who are less risk averse are the smallest; while those who are equally risk averse comprise about 30\% of the participants. Despite this heterogeneity, at the level of individual participants, the findings are consistent across tasks and lottery specifications. In the time domain, while the results are more symmetric than in the risk domain, there is a tendency toward greater impatience regarding the other's delays compared to one's own. Crucially, the SIG methodology successfully identifies selfish types, which constitute approximately 20\% of participants in the risk domain and 12\% in the time domain. 

\paragraph{Methodological differences: SIG vs. no-SIG.}

Comparing the SIG results with the traditional no-SIG method exposes clear contradictions between the two approaches. The no-SIG method yields a distribution of types that differs significantly from the SIG method, even when selfish types are excluded. Most notably, in the risk domain, the no-SIG method suggests that the majority of participants are more risk-averse for themselves than for others, directly contradicting the modal finding of the SIG method. Furthermore, the no-SIG approach drastically underestimates the fraction of participants who hold identical preferences for self and other ($<13$\% vs.\ $>30$\% in SIG). This discrepancy suggests that the no-SIG method's reliance on valuations expressed entirely in the ``other's'' domain fails to capture the actual underlying preference structure, likely due to the decoupling of the decision-maker's own utility from the valuation task.

Interestingly, it is not the presence of selfish agents that explains the differences between the SIG and no-SIG findings. Even altruistic participants behave differently in the SIG than in the no-SIG treatments.

\paragraph{Structural Estimation and Predictive Power.}

Our structural estimation results, assuming Constant Relative Risk Aversion (CRRA) and exponential discounting, corroborates the ordinal findings of the SIG method. The estimated risk parameter for \me ($\theta_M=0.62$  is significantly higher than for \other ($\theta_O=0.55$), confirming that agents behave as if they are more risk-averse  when evaluating the other's consumption. Similarly, the estimated discount factor is higher for self ($\delta_M \approx 0.98$) than for the other ($\delta_O \approx 0.95$), indicating greater patience for one's own delays. In contrast, the no-SIG structural estimates are noisy and fail to reject the null hypothesis that preferences are identical across domains. 

\paragraph{Out of sample prediction.}

Finally, an out-of-sample prediction exercise using a holdout question (Q5) demonstrates that the SIG parameter estimates achieve a significantly lower root mean squared error (RMSE) than the no-SIG estimates, validating the superior predictive power and goodness of fit of the skin-in-the-game approach.

\subsection{Identifying selfish types}\label{sec:selfish} An important aspect of our design is that it allows us to categorize some participants as selfish. In each domain, there are six questions in which participants evaluate objects received by others in terms of the objects received by themselves; these are the questions Q2, Q3, and Q4 with different parameterizations. We use responses to these questions to identify \textit{selfish types}: participants who consistently allocate all available resources to their own consumption and prefer even the smallest amount of their own consumption over any amount allocated to others. Such participants seem to derive utility exclusively from their own consumption and none from the consumption of others.

Specifically, we classify a participant as \textit{selfish} in domain $d$ if they choose an option with any positive potential payoff for themselves (even if probabilistic) over an option that provides a positive payoff to the  \other in at least two out of six questions. In other words, such participants select options benefiting the  \other only when their own payoff is strictly zero. Based on this criterion, 54 out of 267 participants (20\%) are selfish in the risk domain, 31 out of 267 (12\%) are selfish in the time domain, and 57 out of 267 (21\%) are selfish in at least one domain. In Appendix \ref{app:additional}, we show that although the proportion of selfish types naturally depends on the chosen cutoff, our qualitative results remain unchanged under both more stringent and more lenient definitions of selfishness.\footnote{Table \ref{tab:SelfishNbQs} presents the distribution of selfish choices in different number of questions. Table \ref{tab:SIGordinalSelfishDef} replicates the main ordinal classification presented in Table \ref{tab:SIGordinal} in Section \ref{sec:SIGnonpar}.}

\subsection{SIG, non-parametric analysis}\label{sec:SIGnonpar}
Table~\ref{tab:SIGordinal} compares own and \other's  risk and time preferences using the ordering of $y^{\ast}$ and $z^{\ast}$ (Section~\ref{sec:Identification}). The data exhibit substantial heterogeneity: some participants are more risk-averse or more impatient for \me than for \other, some are less so, and many display similar attitudes for \me and \other. The distribution of comparative attitudes is stable across parameterizations in each domain. In the risk domain, type classifications for \me vs.\ \other are positively and significantly correlated across parameterizations ($p<0.01$ for all three comparisons). Similarly, in the time domain, classifications are positively and significantly correlated ($p<0.05$ for all three comparisons).
 
\begin{table}[h!]
    \begin{center}
    \caption{Preferences for \me vs \other, ordinal skin-in-the-game approach}\label{tab:SIGordinal}
 {\footnotesize    \begin{tabular}{l|cc|l|cc } 
       & \multicolumn{2}{c|}{Risk domain} & & \multicolumn{2}{c}{Time domain}  \\ 
       \me $--$ than \other  & $p=0.50$  & $p=0.80$ & \me $--$ than \other & $t=7$ days & $t=14$ days  \\ \hline 
   more risk-averse      & 0.24  & 0.28 &  more impatient  & 0.33& 0.28 \\ 
   equally risk-averse      & 0.31  & 0.33 &   equally impatient  & 0.42 & 0.43 \\ 
   less risk-averse     & 0.45 & 0.39  &  less impatient  & 0.24  & 0.29 \\ \hline 
   nb subjects & $i=211$ & $i=208$  & & $i=227$ & $i=225$\\
    \end{tabular}}
\end{center}
    {\footnotesize \underline{Notes:} We focus on non-selfish participants in each domain with non-dominant choice in Q1 in each parameterization. The comparison of attitudes is done using questions Q2 and Q3: if $y^{\ast}<z^{\ast}$ then \me is more risk-averse (more impatient) than \other;  see Section \ref{sec:Identification} for details.}
\end{table}

Our conclusions hold for the strict definition used in Table~\ref{tab:SIGordinal} and the relaxed definition reported in Table~\ref{tab:SIGordinalRelaxed} (Appendix~\ref{app:additional}).\footnote{The relaxed definition allows small perturbations. In the risk domain, we classify \me as more risk-averse than \other if $y^{\ast} < z^{\ast} - \$1$ (the step size in the MPL for risk questions). In the time domain, we classify \me as more impatient than \other if $y^{\ast} < z^{\ast} - \$0.50$ (the step size in the MPL for time questions).}

In the risk domain, more participants are \emph{less} risk-averse for \me than for \other than have the opposite preferences. So the share of those who are \emph{less} risk-averse for \me than for \other exceeds the share of those who are \emph{more} risk-averse for \me than for \other (Test of Proportions: $p<0.05$ in both parameterizations). By contrast, in the time domain, we find no consistent asymmetry across parameterizations: most participants display similar time preferences for \me and \other.\footnote{When $t=7$ days we observe more participants who are more impatient for \me than for \other ($p<0.05$); when $t=14$ days the difference is not significant ($p=0.92$).}

\begin{figure}[h!]
    \begin{center}
    \caption{Connection between preferences, ordinal skin-in-the-game approach}
    \label{fig:YminusZ}
    \includegraphics[scale=0.25]{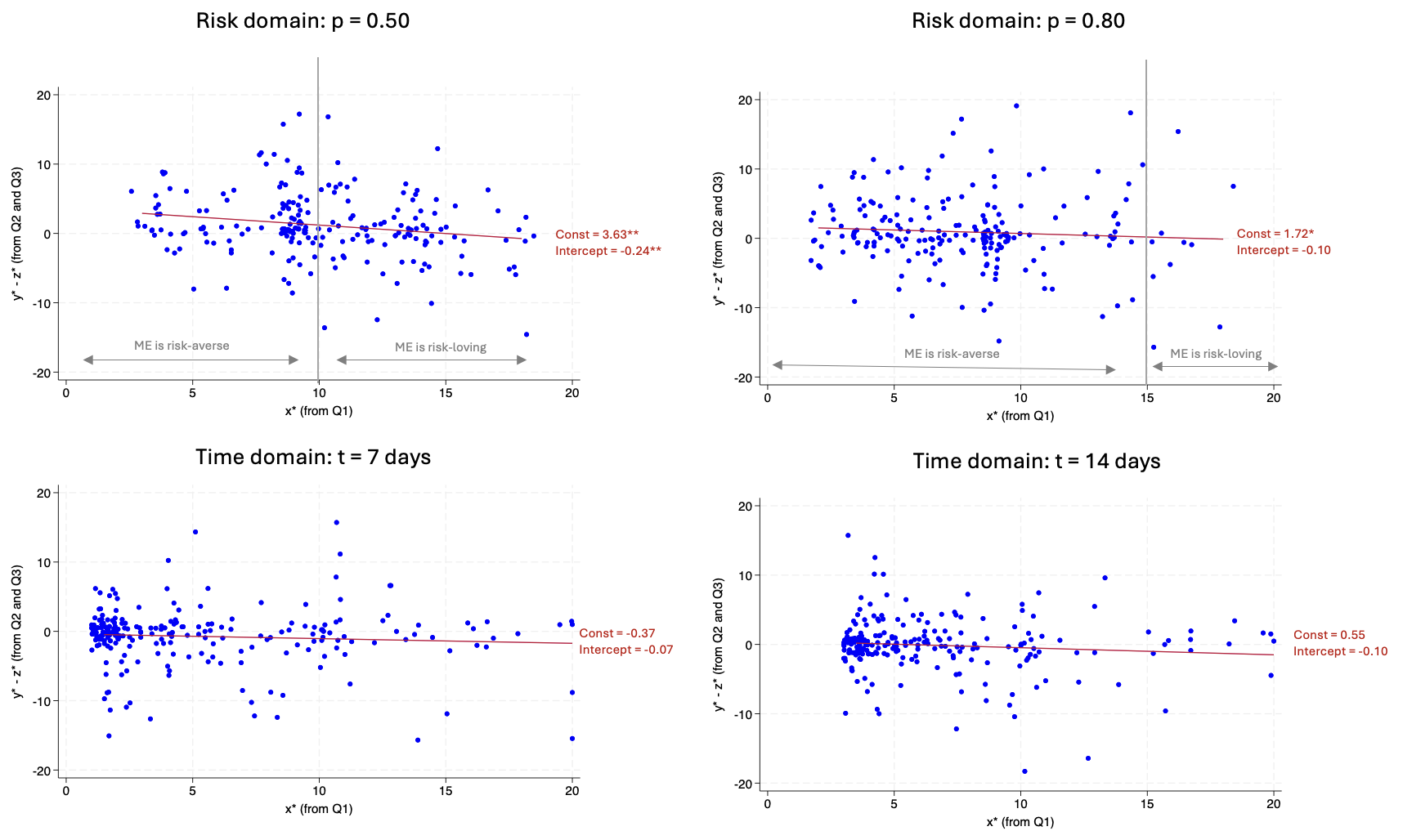}
    \end{center}

    {\footnotesize \underline{Notes:} For each parameterization and each domain, we plot the difference between indifference points $y^{\ast}$ and $z^{\ast}$ versus indifference point $x^{\ast}$. For the risk domain, we split the region of $x^{\ast}$ into the risk-loving and risk-averse/neutral regions. The red line is the linear fit and the estimated intercepts and slopes are recorded next to the line. $^{\ast\ast}$ indicates significance at 5\% level. }
\end{figure}

Figure \ref{fig:YminusZ} shows the relationship between \textsc{me'}s preferences for objects consumed by \other and those consumed by \me, reported separately for each domain. First, we note that for the objects consumed by \me, we reproduce common findings regarding risk and time preferences. In the risk domain, most subjects are risk-averse (75\% for $p=0.50$ and 77\% for $p=0.80$), with a median certainty equivalent of \$9 for a lottery paying \$18.50 with probability 0.50 and \$10 for a lottery paying \$19.50 with probability 0.80. In the time domain, the median subject is indifferent between \$19.75 in 7 days and \$16.50 today (implying a daily interest rate of 2.81\%), and between \$17.50 in 14 days and \$15 today (implying a daily interest rate of 2.38\%). Figure \ref{fig:xstar} in Appendix \ref{app:additional} presents the distribution of $x^{\ast}$ in each parameterization.

The novelty in our design is the comparison between risk and time attitudes for \me and \other. Our data exhibit substantial heterogeneity at the participant level. There is no systematic relationship across participants between own risk preferences and risk preferences for the consumption of \other. The data points in Figure~\ref{fig:YminusZ} are nearly symmetric around zero, indicating that the certainty equivalent of \me consuming a given lottery does not predict \textsc{me'}s certainty equivalent of \other consuming the same lottery (expressed in \me-payments).\footnote{For $p=0.50$, both the intercept and slope of the linear fit are statistically different from zero but their magnitudes are negligible.} An analogous pattern holds for time preferences: an individual’s patience for consumption bundles consumed by \me is unrelated to their patience for bundles consumed by \other. 

Finally, Table \ref{tab:covariates} in the Appendix \ref{app:additional} presents the correlation between the additional control tasks elicited at the end of the experiment and the risk- and time-attitudes we recover for participants for \me vs \other. We find no robust evidence that any of these covariates are related to the differential risk- or time-preferences that people exhibit for their own consumption versus the consumption of others.

\subsection{Comparing SIG and no-SIG, non-parametric analysis}\label{sec:comparenonpar} We next compare the SIG and the no-SIG methods. Table \ref{tab:Compare} reports the classification of types under each method, separately for each domain. Recall that the no-SIG method cannot identify purely selfish types, since the objects consumed by the \other are evaluated only in terms of \textsc{other-}payments. This explains the difference in the number of observations reported in the first two columns of each parameterization. The third column shows the classification based on the no-SIG method after removing the selfish types identified by the SIG method.

The distributions of types recovered by the two methods differ substantially. Consider first the risk domain. Under the no-SIG approach, similar risk attitudes of \me and \other are quite rare---less than 13\% for $p=0.50$ and less than 8\% for $p=0.80$. In contrast, the SIG approach finds that more than 30\% of participants exhibit similar risk attitudes for objects consumed by \me and by \other.\footnote{The differences in the fractions of participants who have similar risk-attitudes in the SIG and no-SIG approaches are statistically significant, $p<0.01$.} Moreover, while the no-SIG approach suggests that the majority of participants are more risk-averse for \me than for \other, the SIG approach identifies the opposite pattern as modal behavior.\footnote{Under the no-SIG approach, the fraction of subjects classified as more risk-averse for \me than for \other is significantly larger than the fraction classified as less risk-averse: $p<0.01$ in both parameterizations.} Interestingly, removing the selfish types and reapplying the no-SIG approach leaves the distribution of types virtually unchanged. This occurs because in the no-SIG case, the majority of purely selfish types are misclassified as being more risk-averse for \me than \other when asked to state the certainty equivalent of the lottery consumed by \other in \other-terms. Indeed, the no-SIG method recovers a similar distribution of risk attitudes for \me and \other when applied only to selfish types (identified by the SIG approach), all non-selfish types, or all subjects together.\footnote{The no-SIG approach applied to selfish participants yields the following type distributions: for $p=0.50$, 50\% are more risk-averse, 17\% equally risk-avers, and 33\% are less risk-averse for \me than \other; for $p=0.80$, 57\% are more risk-averse, 6\% equally risk-averse, and 37\% are less risk-averse for \me than \other.} 

\begin{table}[h!]
    \centering
    \caption{Comparing types recovered by SIG and no-SIG methods}
    \label{tab:Compare}
 {\footnotesize   \begin{tabular}{l|ccc|ccc}
         & \multicolumn{3}{c|}{Risk domain: $p=0.50$}    & \multicolumn{3}{c}{Risk domain: $p=0.80$}\\ \hline 
         & SIG & no-SIG & no-SIG  &   SIG & no-SIG & no-SIG  \\ 
         & & & non-selfish & & & non-selfish \\\hline 
         \me more risk-averse than \other & 0.24& 0.60 & 0.63& 0.28 & 0.56& 0.56\\
         \me equally risk-averse as \other & 0.31& 0.13& 0.12&0.33 & 0.08& 0.08\\
         \me less risk-averse than \other & 0.45& 0.27& 0.25&0.39 & 0.36 & 0.36\\ \hline
         nb subjects & $i=211$& $i=265$ & $i=211$& $i=208$ & $i=262$ & $i=208$ \\
     \hline  \\
        & \multicolumn{3}{c|}{Time domain: $t=7$  days}    & \multicolumn{3}{c}{Time domain: $t=14$ days}\\ \hline 
        & SIG & no-SIG & no-SIG  &   SIG & no-SIG & no-SIG  \\ 
         & & & non-selfish & & & non-selfish \\\hline 
         \me more impatient than \other & 0.33 & 0.44 & 0.45 & 0.28 &0.48  & 0.48 \\
         \me equally impatient as \other  & 0.42 & 0.22 & 0.20& 0.43 & 0.26 & 0.24\\
         \me less impatient than \other  & 0.24 & 0.34 & 0.34& 0.29 & 0.26 & 0.28\\ \hline
         nb subjects & $i=227$& $i=257$ & $i=227$& $i=225$ & $i=254$ & $i=225$ \\
     \hline      
    \end{tabular}}

\vspace{2mm}
    {\footnotesize \underline{Notes:} SIG stands for skin-in-the game approach, while no-SIG stands for no-skin-in-the-game approach.}
\end{table}

In the time domain, the patterns are similar. The fraction of participants who display similar time preferences for \me and \other is significantly smaller under the no-SIG than under the SIG approach.\footnote{We obtain $p<0.01$ in both parameterizations.} Moreover, while the SIG approach recovers roughly equal fractions of participants who are more rather than less impatient for \me than for \other, the no-SIG approach produces a skewed distribution, with substantially more participants classified as more impatient for \me than for \other.\footnote{In the SIG approach, the fraction of participants classified as more impatient for \me than for \other exceeds the fraction classified as less impatient at $t=7$ days ($p=0.03$), but not at $t=14$ days ($p=0.92$). In contrast, under the no-SIG approach, the fraction classified as more impatient is significantly higher than the fraction classified as less impatient in both parameterizations ($p=0.02$ for $t=7$ days; $p<0.01$ for $t=14$ days).} Furthermore, as in the risk domain, removing the selfish types and reapplying the no-SIG approach does not change the distribution of recovered types.\footnote{The no-SIG approach applied to selfish participants yields the following type distributions: for $t=7$ days, 33\% are more impatient, 37\% equally impatient, and 30\% less impatient for \me than \other; for $t=14$ days, 41\% are more impatient, 41\% equally impatient, and 17\% less impatient.}

The similarity between the no-SIG distributions with and without selfish types, combined with the stark differences from the distributions recovered by the SIG approach, suggests that the discrepancies between the two approaches are not merely due to the no-SIG approach’s inability to identify selfish types. Rather, they reflect a more fundamental divergence in how individuals evaluate objects in terms of their own versus others’ consumption.

\subsection{SIG, structural parametric estimation}\label{sec:structural}

We now turn to a structural estimation of a parametric model of preferences. In particular, we assume that participants' utilities over their own consumption, as well as the utility over \other's consumption, takes the form of exponential discounting  with constant relative risk aversion (CRRA). Under these parametric assumptions, we structurally estimate the parameters governing risk and time preferences: the coefficient of relative risk aversion and the discount factor. We use non-linear least squares to jointly estimate four parameters: risk for \me ($\theta_M$), risk for \other ($\theta_O$), time for \me ($\delta_M$), and time for \other ($\delta_O$). This estimation is performed both at the aggregate level (Table \ref{tab:AggEST}) and at the individual level (Figure \ref{fig:IndEstimates}).

The results of the aggregate estimation are reported in Table~\ref{tab:AggEST}. Regression (1) reports the estimated parameters for non-selfish participants in both domains using the three core questions of SIG method. For comparison, regression (2) reports estimates based on the two questions required to recover risk and time parameters under the no-SIG method.

\begin{table}[h!]
    \begin{center}
    \caption{Estimating risk and time preferences parameters for \me and \other}
    \label{tab:AggEST}
   {\footnotesize \begin{tabular}{l|c|c}
   & SIG method & no-SIG method \\
         &  Reg (1) & Reg (2)  \\
  \hline 
     risk parameter for \me   $\theta_M$ & 0.62 (0.02) & 0.63 (0.02)\\
     risk parameter for \other    $\theta_O$  & 0.55 (0.02)& 0.67 (0.04)\\
     time parameter for \me   $\delta_M$ &  0.98 (0.002) & 0.98 (0.002)\\
     time parameter for \other    $\delta_O$  & 0.95 (0.002)  & 0.98 (0.002)\\ \hline 
     H0: $\theta_M = \theta_O$ & $p<0.01$& $p=0.270$ \\
     H0: $\delta_M = \delta_O$ &$p<0.01$ & $p=0.127$ \\ \hline 
          sample   & Q1, Q2, Q3 & Q1, Q4\\  
          nb obs & $n=2412$& $n=2043$\\
          nb subjects & $i=210$ & $i=266$\\
    \end{tabular}}
\vspace{2mm}
\end{center}
    {\footnotesize \underline{Notes:} Nonlinear least-squares estimation is performed assuming CRRA utility and exponential discounting, i.e., $v(x,t) = \delta_M^t \cdot x^{\theta_M}$ and $w(x,t) = \delta_O^t \cdot x^{\theta_O}$. Robust standard errors are reported in the parenthesis with clustering at the participant level.}
\end{table}

Table \ref{tab:AggEST} conveys several interesting findings. First, both the SIG and no-SIG method recover standard preferences for both \me and \other: on average, \me is risk-averse and impatient for both \me and \other, and the parameter estimates are consistent with the range of empirical estimates in the literature. Second, according to the SIG method, \me is more risk-averse when evaluating lotteries consumed by \other than by \me, and more impatient when delays are incurred by \other than by \me. The estimated risk and time parameters for \me and \other differ significantly at the 1\% level. These aggregate estimations align closely with the non-parametric individual-level results presented in the previous section. The only difference between the two is the following: while the non-structural approach shows roughly equal proportions of participants who are more or less impatient for \me than for \other, the structural estimation detects a slight asymmetry in time attitudes. This difference, however, is quantitatively very small.

Third, the no-SIG estimates show that \me is slightly less risk-averse for \other than for \me and is equally impatient for both. However, the estimates are quite noisy: in fact, in both domains the estimated parameters for \me and \other are not statistically different ($p>0.10$). These results are inconsistent with the non-structural no-SIG results reported in the previous section. Specifically, the non-structural no-SIG approach shows that more participants are more risk-averse and more impatient for \me than for \other (Table \ref{tab:Compare}) contradicting the structural estimates.

\begin{figure}
\begin{center}
\caption{Individual Structural Estimates of Risk and Time Parameters}\label{fig:IndEstimates}
\includegraphics[scale=0.13]{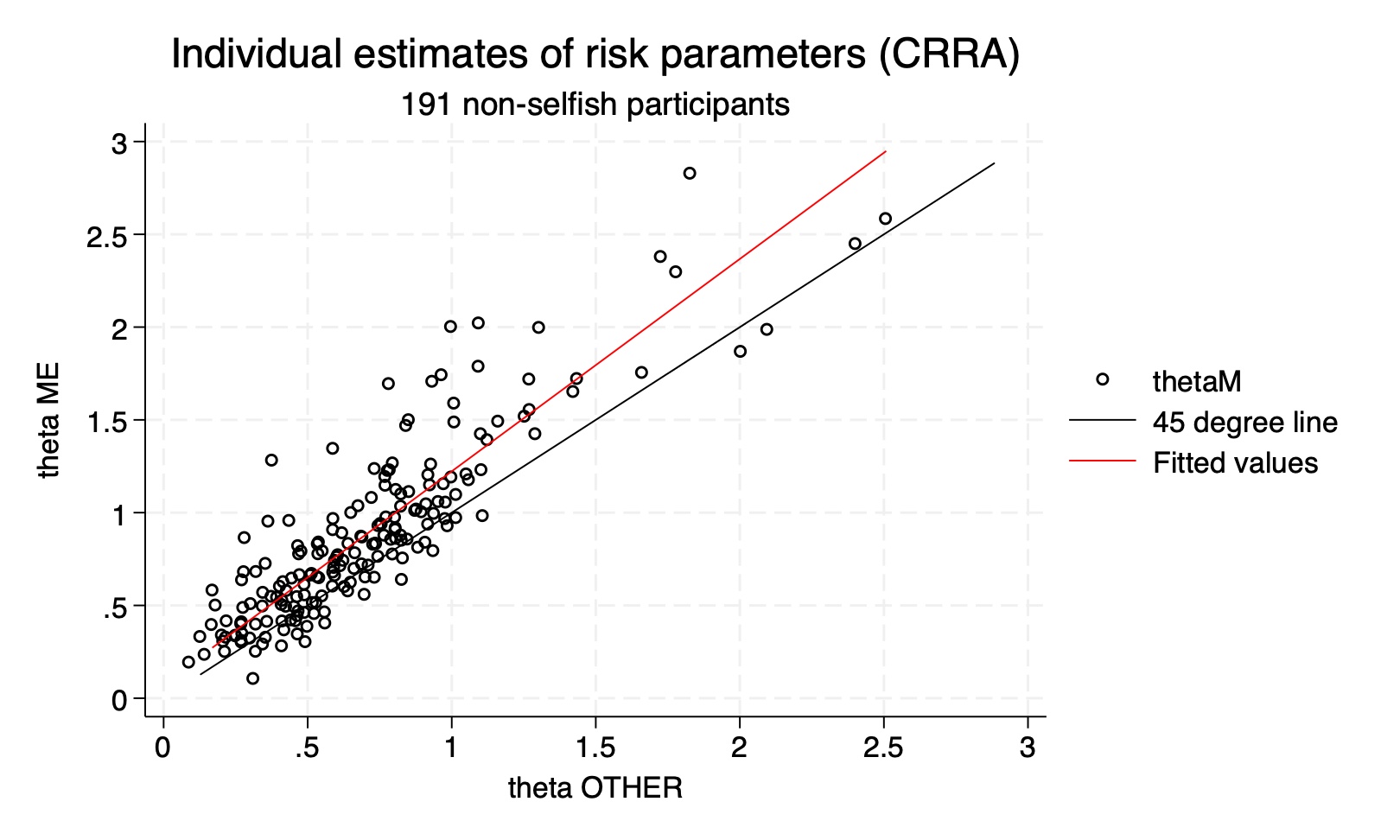} 
\includegraphics[scale=0.13]{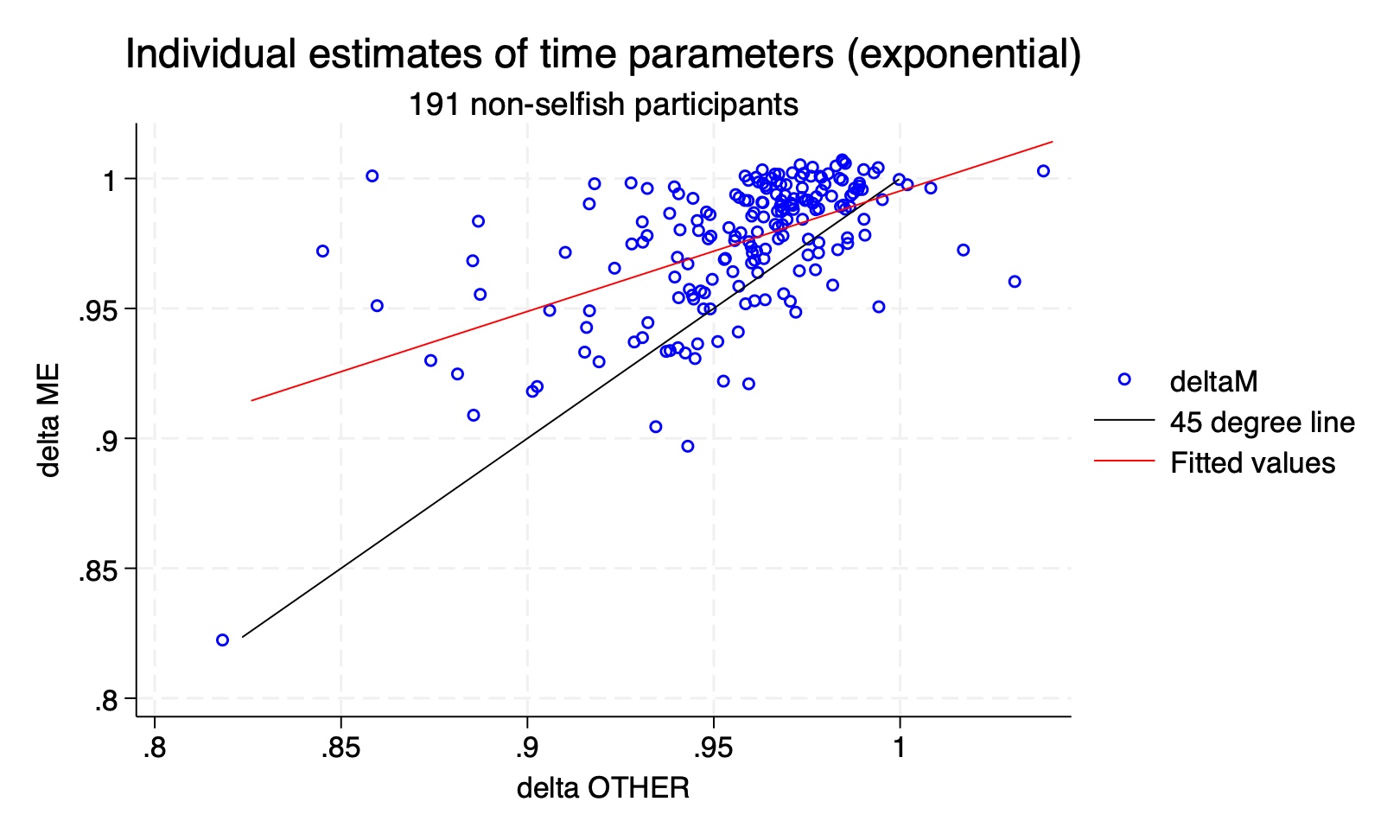}\\
\includegraphics[scale=0.13]{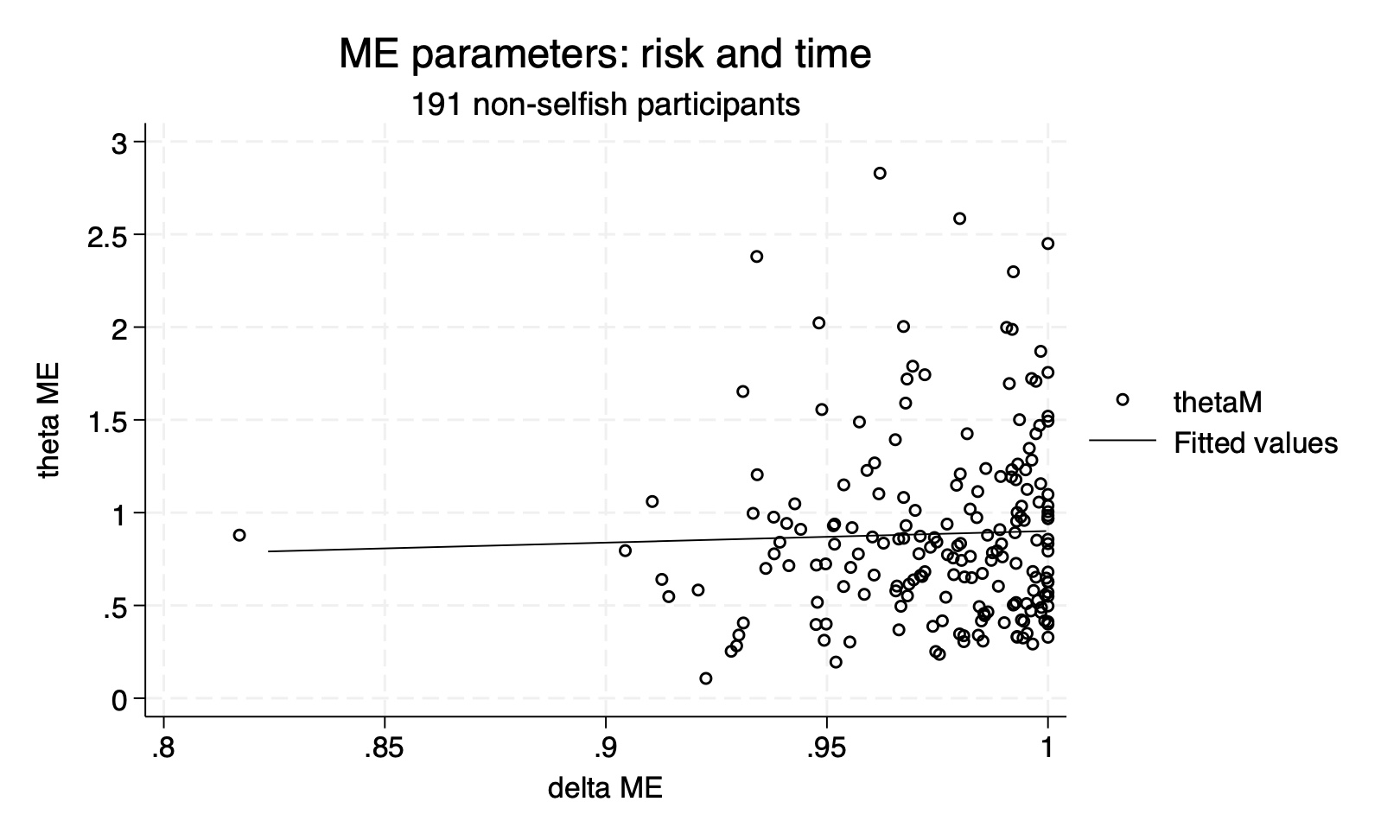} 
\includegraphics[scale=0.13]{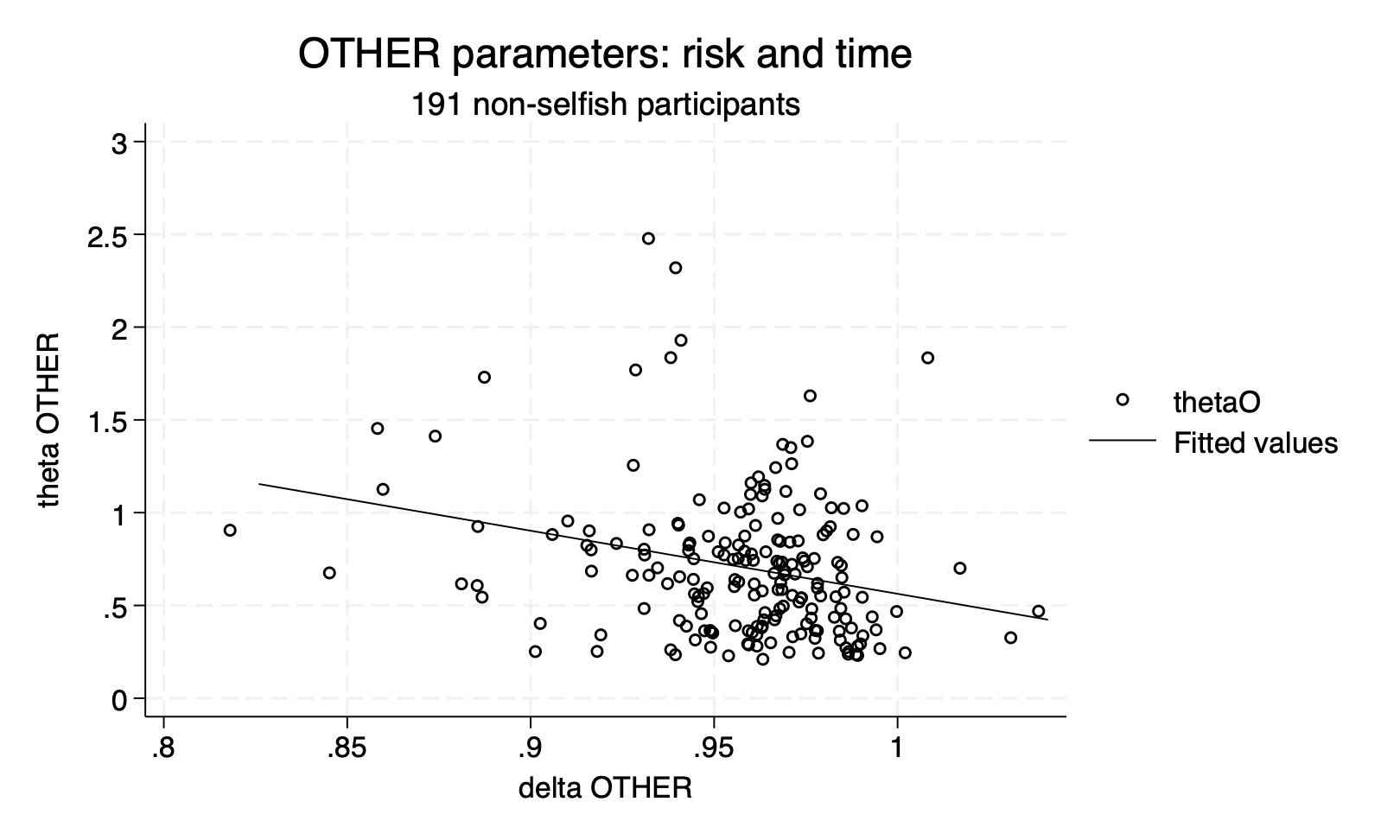}
\end{center}

{\footnotesize \underline{Notes:} Individual estimates of risk and time parameters are plotted for non-selfish participants. We exclude 19 participants for whom we cannot reliably recover their individual estimates. The estimation is based on Q1, Q2, and Q3 in each parameterization in both domains for a maximum of 12 questions per person. The top left (right) panel depicts the relationship between risk (time) preferences for \me (\other). The bottom left (right) panel depicts the relationship between own (other's) parameters in the risk and time domain.  }
\end{figure}

\clearpage
The results of the individual-level estimation are reported in Figure~\ref{fig:IndEstimates}, which graphs our individual-level parameter estimates for the risk and time-preference parameters. The findings in Figure~\ref{fig:IndEstimates} largely corroborate the message of our estimation with aggregate data.  We focus on the 191 non-selfish participants who chose at least five interior certainty equivalents out of the 12 questions used for this estimation. The top two panels compare risk (time) preferences for \me and \other, while the bottom two panels plot both domain parameters separately for \me and \other.

The individual-level SIG estimates closely mirror the aggregate results discussed above. A majority of participants (69\%) are risk-averse for both \me and \other, while 14\% are risk-loving for both. Importantly, majority of participants (81\%) are more risk-averse for \other than for \me, and roughly 10\% display identical risk attitudes across the two. Turning to time preferences, nearly all participants (97\%) are impatient for both \me and \other. Among them, a majority (53\%) exhibit the same degree of impatience, while 41\% are more impatient for \other than for \me.\footnote{To make these individual comparisons we allowed for some slack in the estimates. In particular, we say that \me is more risk-averse for \other than for \me if $\theta_M> \theta_O+0.02$; similarly, we say that \me is more impatient for \other than for \me if $\delta_M> \delta_O+0.02$.}

\subsection{Predictive ability of SIG and no-SIG methods}\label{sec:predict}

In our final exercise, we assess the relative performance of the two methods by conducting an out-of-sample prediction exercise based on the estimated parameters reported in Table~\ref{tab:AggEST}. Standard goodness-of-fit measures such as $R^2$, root MSE, Akaike Information Criterion, or Bayesian Information Criterion are not appropriate in this case, since the SIG and no-SIG methods are estimated on different samples with different numbers of observations. So we use an out of sample prediction exercise to assess both goodness of fit and predictive ability. Specifically, we evaluate how well the parameter estimates obtained under each method predict behavior in Q5, which was not used in the structural estimations. The results indicate that the SIG method achieves a significantly lower root MSE than the no-SIG method (Wilcoxon sign-rank test: $p<0.01$).

We conclude that the structural SIG approach outperforms the no-SIG approach in its out-of-sample predictive ability. Moreover, it produces estimates that are consistent across both the aggregate and individual levels and, importantly, align closely with the non-structural SIG results discussed in the previous section. In contrast, the no-SIG approach yields noisier estimates that lack internal consistency. We view this coherence as an additional strength of the SIG method.

\clearpage

\appendix
\counterwithin{table}{section}
\counterwithin{figure}{section}

\section{Parameters}\label{app:parameters}

Table \ref{tab:parameters} summarizes the parameters used in each question of the experiment. The numbers shown in purple vary depending on the switching row in Question 1 (Q1) for each domain and parameterization:
\begin{itemize}
\item For risk questions with $p=0.50$, we report the Q2–Q4 values corresponding to a decision-maker who is indifferent between receiving \$9 for sure and a lottery paying \$18.50 with probability $0.50$ and \$0 with probability $0.50$.
\item For risk questions with $p=0.80$, we report the Q2–Q4 values corresponding to a decision-maker who is indifferent between receiving \$13 for sure and a lottery paying \$18 with probability $0.80$ and \$0 with probability $0.20$.
\item For time questions with a 7-day delay ($t=7$), we present the Q2–Q4 values corresponding to indifference between receiving \$18.50 today and \$18.75 in 7 days.\item For time questions with a 14-day delay ($t=14$), we present the Q2–Q4 values corresponding to indifference between receiving \$16.50 today and \$16.75 in 14 days.
\end{itemize}

\begin{table}[h!]
    \centering
    \caption{Parameters of experiment}
    \label{tab:parameters}
{\scriptsize     \begin{tabular}{ll|ll|ll|lll|l}
  &  & \multicolumn{2}{l|}{Left option} & \multicolumn{2}{l|}{Right option} & \multicolumn{3}{l|}{Changes across rows} & nb rows \\ \hline 
       &  & recipient & object & recipient & object & $x_{\text{min}}$ & $x_{\text{max}}$ & $x_{\text{step}}$ &  \\ \hline 
         \multicolumn{2}{l|}{\textbf{RISK}} & & & & & & & & \\
         Q1 & $p=0.50$ & \me & \$(18-$x$) & \me & \$(14+0.5$x$) prob $p$ & 0 & 18 & 1 & 19 rows \\
         Q2 & $p=0.50$ & \other & \textcolor{purple}{\$18} & \me & \$$x$ & 0 & 26 & 1 & 27 rows\\
         Q3 & $p=0.50$& \other & \textcolor{purple}{\$37 prob $p$} & \me & \$$x$ & 0 & 26 & 1 & 27 rows\\
         Q4 & $p=0.50$ & \other & \textcolor{purple}{\$18.50 prob $p$} & \other & \$$x$ & 0 & 26 & 1 & 27 rows \\
         Q5 & $p=0.50$ &\me & \$(18-$x$) & \other & \$(28+$x$) prob $p$ & 0 & 18 & 1 & 19 rows \\  \hline 
         Q1 & $p=0.80$ & \me & \$(18-$x$) & \me & \$(15.50+0.5$x$) prob $p$ & 0 & 18 & 1 & 19 rows \\
         Q2 & $p=0.80$ & \other & \textcolor{purple}{\$19.50} & \me & \$$x$ & 0 & 26 & 1 & 27 rows\\
         Q3 & $p=0.80$ & \other & \textcolor{purple}{\$27 prob $p$} & \me & \$$x$ & 0 & 26 & 1 & 27 rows\\
         Q4 & $p=0.80$ &\other & \textcolor{purple}{\$18 prob $p$} & \other & \$$x$ & 0 & 26 & 1 & 27 rows \\
         Q5 & $p=0.80$ &\me & \$(18-$x$) & \other & \$(23.25+0.75$x$) prob $p$ & 0 & 18 & 1 & 19 rows \\  \hline 
         Q1 & $t=7$ days & \me & \$(20-$x$) today & \me & \$(18+$0.5x$) in $t$ & 0 & 20 & 0.50 & 41 rows\\
         Q2 & $t=7$ days & \other & \textcolor{purple}{\$27.75 today} & \me & \$$x$ today & 0 & 20 & 0.50 & 41 rows\\
         Q3 & $t=7$ days & \other & \textcolor{purple}{\$28.13 in $t$} & \me & \$$x$ today & 0 & 20 & 0.50 & 41 rows\\
         Q4 & $t=7$ days & \other & \textcolor{purple}{\$18.75 in $t$} & \other & \$$x$ today & 0 & 20 & 0.50 & 41 rows \\
         Q5 & $t=7$ days & \me & \$(20-$x$) today & \other & \$(27+0.75$x$) in $t$ & 0 & 20 & 0.50 & 41 rows \\ \hline 
         Q1 & $t=14$ days & \me & \$(20-$x$) today & \me & \$(15+$0.5x$) in $t$ & 0 & 20 & 0.50 & 41 rows\\
         Q2 & $t=14$ days & \other & \textcolor{purple}{\$33 today} & \me & \$$x$ today & 0 & 20 & 0.50 & 41 rows\\
         Q3 & $t=14$ days & \other & \textcolor{purple}{\$33.50 in $t$} & \me & \$$x$ today & 0 & 20 & 0.50 & 41 rows\\
         Q4 & $t=14$ days & \other & \textcolor{purple}{\$16.75 in $t$} & \other & \$$x$ today & 0 & 20 & 0.50 & 41 rows \\
         Q5 & $t=14$ days & \me & \$(20-$x$) today & \other & \$(30+$x$) in $t$ & 0 & 20 & 0.50 & 41 rows \\  \hline 
    \end{tabular}}
\end{table}

\clearpage
\section{Additional Analysis}\label{app:additional}

\begin{table}[h!]
    \begin{center}
    \caption{Proportions of subjects behaving selfishly in different number of questions}
    \label{tab:SelfishNbQs}
 {\footnotesize        \begin{tabular}{l|cc}
    \hline \hline 
         & Risk domain & Time domain \\ \hline 
        never &  0.73&0.84 \\
        1 question & 0.07& 0.05 \\
        2 questions & 0.03& 0.02\\
        3 questions &0.02 & 0.005\\
        4 questions &  0.04& 0.04 \\
        5 questions & 0.02&  0.01\\
        6 questions & 0.09 & 0.05 \\ \hline
         nb subjects & 267 &  267 \\
    \hline \hline 
    \end{tabular}}
\end{center}

    {\footnotesize \underline{Notes:} Selfish behavior means choosing options that allocate all available resources to own rather than other's consumption. We consider questions Q2, Q3, and Q4 in each parameterization and in each domain for a total of six questions per domain.}
\end{table}

\begin{table}[h!]
    \begin{center}
    \caption{Risk and time preferences for \me vs \other under different definitions of selfishness, ordinal skin-in-the-game approach}\label{tab:SIGordinalSelfishDef}
 {\footnotesize    \begin{tabular}{l|cc|cc|cc } 
       & \multicolumn{2}{c|}{Main definition} & \multicolumn{2}{c}{Definition 1} & \multicolumn{2}{c}{Definition 2} \\ 
       \me $--$ than \other  & $p=0.50$  & $p=0.80$ & $p=0.50$  & $p=0.80$ & $p=0.50$  & $p=0.80$  \\ \hline 
   more risk-averse      & 0.24  & 0.28 & 0.24 & 0.28 & 0.26&  0.29  \\ 
   equally risk-averse      & 0.31  & 0.33 & 0.33&0.35 & 0.31& 0.32 \\ 
   less risk-averse     & 0.45 & 0.39  & 0.42& 0.37 & 0.44& 0.39 \\ \hline 
   nb subjects & $i=211$ & $i=208$  & $i=192$ & $i=191$  & $i=226$ & $i=223$\\
    \end{tabular}}
    
    \vspace{5mm}

{\footnotesize    \begin{tabular}{l|cc|cc|cc } 
       & \multicolumn{2}{c|}{Main definition} & \multicolumn{2}{c}{Definition 1} & \multicolumn{2}{c}{Definition 2} \\ 
       \me $--$ than \other  & $t=7$ days  & $t=14$ days  & $t=7$ days  & $t=14$ days & $t=7$ days  & $t=14$ days \\ \hline 
   more impatient      & 0.33  & 0.28 &  0.33 & 0.27& 0.33&  0.28 \\ 
   equally impatient     & 0.42  & 0.43& 0.43& 0.44 & 0.42 & 0.43\\ 
   less impatient     & 0.24 & 0.29  & 0.24 & 0.29&0.25 & 0.29 \\ \hline 
   nb subjects & $i=227$ & $i=225$  & $i=215$ & $i=214$ &  $i=232$ &   $i=229$\\
    \end{tabular}}
\end{center}
    {\footnotesize \underline{Notes:} We focus on non-selfish participants in each domain with non-dominant choice in Q1 in each parameterization. The main definition uses the definition of selfishness defined in Section XX in the paper, according to which participant is classified as non-selfish if they choose options that allocate all available resources to own rather than other's consumption in less than 2 out of 6 questions. Definition 1 is more stringent: non-selfish participants never allocate all available resources to only own consumption. Definition 2 is more lenient: non-selfish participant allocate all resources to own consumption in at most 3 out of 6 questions. The comparison of attitudes is done using questions Q2 and Q3: if $y^{\ast}<z^{\ast}$ then \me is more risk-averse (impatient) than \other;  see Section \ref{sec:Identification} for details.}
\end{table}

\begin{table}[h!]
    \begin{center}
    \caption{Preferences for \me vs \other, the `relaxed' definition, ordinal skin-in-the-game approach}\label{tab:SIGordinalRelaxed}
 {\footnotesize    \begin{tabular}{l|cc|l|cc } 
       & \multicolumn{2}{c|}{Risk domain} & & \multicolumn{2}{c}{Time domain}  \\ 
       \me $--$ than \other  & $p=0.50$  & $p=0.80$ & \me $--$ than \other & $t=7$ days & $t=14$ days  \\ \hline 
   more risk-averse      & 0.20 & 0.22 &  more impatient  & 0.29& 0.22 \\ 
   equally risk-averse      & 0.43  & 0.44 &   equally impatient  & 0.52 & 0.53 \\ 
   less risk-averse     & 0.40 & 0.34  &  less impatient  & 0.19  & 0.25 \\ \hline 
   nb subjects & $i=211$ & $i=208$  & & $i=227$ & $i=225$\\
    \end{tabular}}
\end{center}
    {\footnotesize \underline{Notes:} We focus on non-selfish participants in each domain with non-dominant choice in Q1 in each parameterization. The comparison of attitudes is done using questions Q2 and Q3 allowing for small perturbations. If $y^{\ast}<z^{\ast} - \$1$ then \me is more risk-averse than \other, while if $y^{\ast}>z^{\ast} +\$1$ then \me is less risk-averse than \other. Similarly, if   $y^{\ast}<z^{\ast} - \$0.50$ then \me is more impatient than \other, while if $y^{\ast}>z^{\ast} + \$0.50$ then \me is less impatient than \other.}
\end{table}

\begin{figure}[h!]
    \begin{center}
    \caption{Risk and time preferences for \me}
    \label{fig:xstar}
    \includegraphics[scale=0.25]{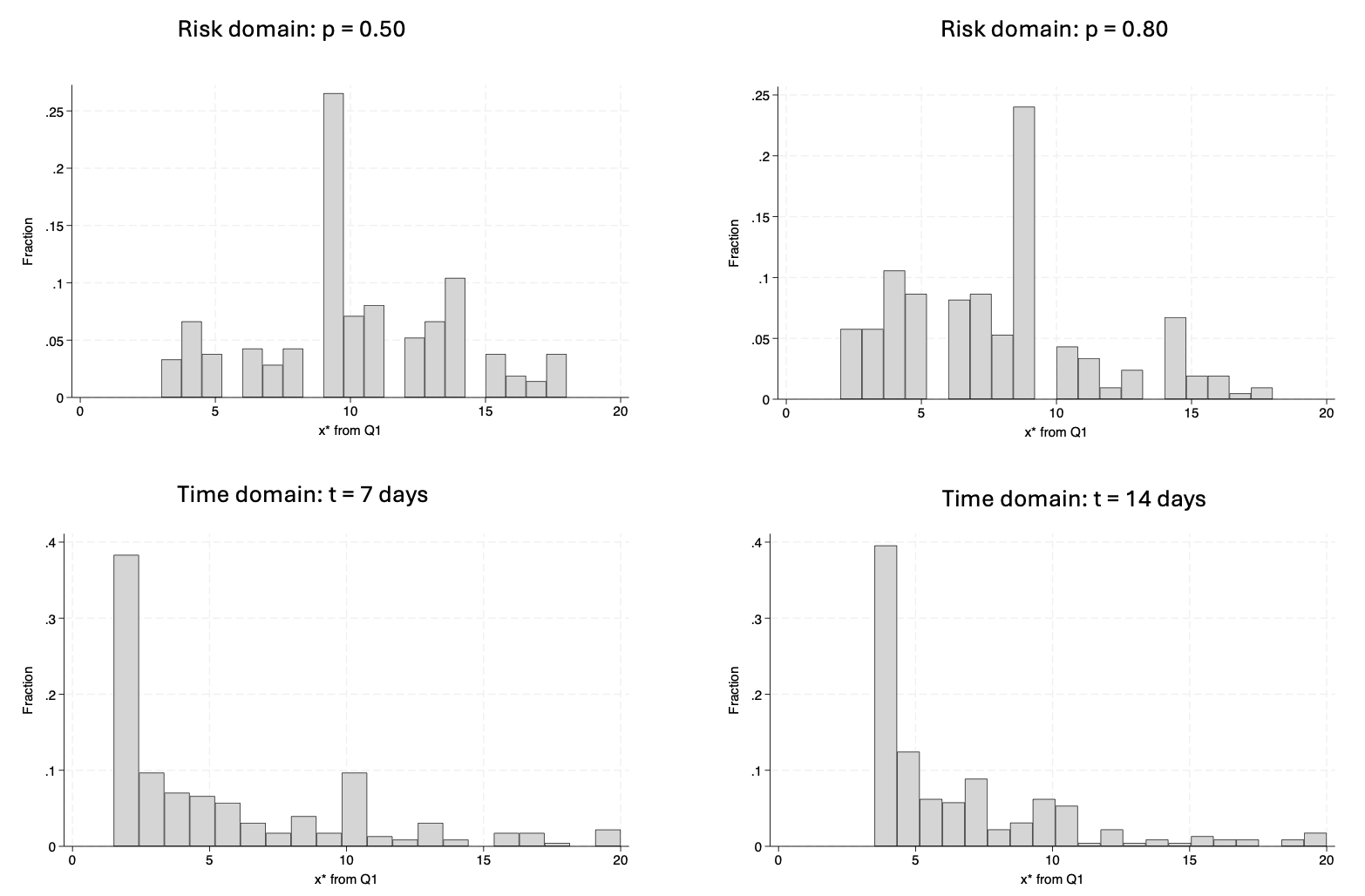}
    \end{center}

    {\footnotesize \underline{Notes:} We plot the histogram of $x^{\ast}$ elicited in Q1 of the corresponding domain/parameterization focusing on non-dominated choices of non-selfish participants.   }
\end{figure}

\begin{table}[h!]
    \begin{center}
    \caption{Effect of Covariates on Risk and Time Preferences for \me vs \other}
    \label{tab:covariates}
   {\scriptsize \begin{tabular}{l|ccc|ccc}
         &  \multicolumn{6}{c|}{Risk domain} \\
         &  \multicolumn{3}{c}{$p=0.50$}  &  \multicolumn{3}{c}{$p=0.80$} \\ 
         & more  & same & less & more & same & less \\  
         &  risk-averse &  &  risk-averse  &  risk-averse &  &  risk-averse \\ \hline 
         Male & -0.04 (0.06) & 0.25$^{\ast \ast}$ (0.06)& -0.20$^{\ast \ast}$ (0.06)& 0.01 (0.06) & 0.04 (0.07) & -0.04 (0.07)\\
         Age (over 35) & -0.002 (0.06)& 0.08 (0.06)& -0.08 (0.07)& -0.01 (0.06)& 0.05 (0.07) & -0.04 (0.07)\\
         College Degree (completed) & -0.14$^{\ast\ast}$ (0.06)& 0.004 (0.07)& 0.14$^{\ast \ast}$ (0.07)& -0.02 (0.07) & -0.09 (0.07) & 0.12$^{\ast}$ (0.07)\\ 
         No children & -0.07 (0.06)& -0.13$^{\ast \ast}$ (0.07) & 0.20$^{\ast \ast}$ (0.07)& -0.003 (0.07) & -0.05 (0.07) & 0.05 (0.07)\\
         No siblings & 0.16 (0.12)& -0.14 (0.14) &-0.02 (0.15) &0.04 (0.15)& -0.05 (0.07) & 0.14 (0.15)\\
         Liberal & 0.10$^{\ast}$ (0.06)& -0.09 (0.06)& -0.007 (0.07)& 0.06 (0.06)& 0.06 (0.06) & -0.12$^{\ast}$ (0.07)\\
         Charity (almost never) & -0.04 (0.10)& 0.16$^{\ast}$ (0.09)& -0.12 (0.12)&  -0.08 (0.11) & 0.04 (0.11) & 0.05 (0.11)\\
         Retirement (not risky) & 0.01 (0.06)& 0.08 (0.06)& -0.09 (0.07) & 0.05 (0.06)  & 0.03 (0.07) & -0.08 (0.07)\\ \hline
         nb of participants & $i=213$ & $i=213$& $i=213$& $i=213$& $i=213$&$i=213$  \\
         
    \end{tabular}}

\vspace{3mm}

       {\scriptsize \begin{tabular}{l|ccc|ccc}
         &  \multicolumn{6}{c|}{Time domain} \\
         &  \multicolumn{3}{c}{$t=7$ days}  &  \multicolumn{3}{c}{$t=14$ days} \\ 
         & more  & same & less & more & same & less \\  
         &  impatient &  &  impatient & impatient &  &  impatient \\ \hline 
         Male & 0.001 (0.06)& 0.09 (0.06)& -0.09$^{\ast}$ (0.06)& 0.03 (0.06)& -0.001 (0.06)&-0.02 (0.06) \\
         Age (over 35) & -0.14$^{\ast \ast}$ (0.06)& 0.21$^{\ast \ast}$ (0.06)& -0.07 (0.05)& -0.12$^{\ast \ast}$ (0.06)&0.15$^{\ast \ast}$ (0.06) &  -0.03 (0.06)\\
         College Degree (completed) & -0.02 (0.07)& -0.10 (0.07)& 0.12$^{\ast}$ (0.06)& 0.004 (0.06)& -0.06 (0.07)& 0.06 (0.07)\\
         No children & -0.02 (0.07)& 0.01 (0.07)& 0.01 (0.06)& -0.01 (0.06)& 0.14$^{\ast \ast}$ (0.07)& -0.12$^{\ast}$ (0.06)\\
         No siblings & 0.05 (0.12)& -0.04 (0.13)& -0.008 (0.11)& 0.19$^{\ast}$ (0.10)& -0.20 (0.14)& 0.003 (0.13)\\
         Liberal & 0.11$^{\ast}$ (0.06)& -0.13$^{\ast\ast}$ (0.06)& 0.03 (0.05) & -0.007 (0.06)& -0.09 (0.06)& 0.10$^{\ast}$ (0.06)\\
         Charity (almost never) & 0.06 (0.10)&-0.21$^{\ast}$ (0.10) & 0.14$^{\ast}$ (0.09)& -0.006 (0.09)&-0.08 (0.11) & 0.09 (0.10) \\
         Retirement (not risky) & 0.09 (0.06)& -0.03 (0.06)& -0.06 (0.06)& 0.09 (0.06)& -0.10 (0.06)& 0.01 (0.06)\\ \hline
         nb of participants & $i=236$ & $i=236$& $i=236$& $i=236$& $i=236$&$i=236$  \\
         
    \end{tabular}}
    \end{center}

    {\footnotesize \underline{Notes:} We estimate Multinomial Logit regression with dependent variable recording whether the subjects is more, same, or less risk-averse for \me than \other in the top panel and more, same, less impatient impatient for \me than for \other in the bottom panel for each parameterization separately. We then report the average marginal effect (AME) of each covariate on the probability that the outcome takes each of the three values. In both domains we focus on non-selfish participants as defined in our results section. }
\end{table}

\clearpage
\section{Experimental instructions}

\newcommand{\iprem}{P}

\newcommand{\drawrect}[1]{\tikz \fill [#1] (0,0) rectangle (0.5,0.3);}
\newcommand{\drawcirc}[1]{\tikz \fill [#1] (0,0) circle (2pt);}
\newcommand{\drawlegend}{
	\begin{tabular}{cccc}
    \multicolumn{4}{c}{\bf Legend}\\
    \toprule
    	seeking & neutral & averse &\\
    \midrule
    	\drawrect{ColorRed!80} & \drawrect{ColorBlue!80} & \drawrect{ColorGreen!80} & $\% >0$\\
    	\drawrect{ColorRed!50!white} & \drawrect{ColorBlue!50!white} & \drawrect{ColorGreen!50!white} & $\% =0$\\
    	\drawrect{ColorRed!20!white} & \drawrect{ColorBlue!20!white} & \drawrect{ColorGreen!20!white} & $\% <0$\\
    	\drawcirc{ColorRed!20!white} & \drawcirc{ColorBlue!20!white} & \drawcirc{ColorGreen!20!white} & data\\
    	\bottomrule
    \end{tabular}
}

\subsection*{WELCOME}

\noindent Thank you for participating in this experiment. We ask you to avoid opening other websites and apps until you finish all tasks. The experiment will take approximately 30 minutes.\\

\noindent To begin the experiment, please enter your Prolific ID:

\subsection*{PAYMENT}

\noindent The study consists of 20 choice tasks. You will get \$5 for completing the study. \\

\noindent In addition, we will randomly select one out of every five participants to receive a bonus payment. If you are selected to receive a bonus, then at the end of the study the computer will randomly select one of the tasks. Your bonus payment will depend on your responses in this randomly selected task.\\

\noindent Note two things: 

\begin{enumerate} 
\item There are no right or wrong answers. We are interested in studying \textbf{your preferences.}
\item Any one of the decisions you make today may determine your bonus payment. Therefore, it is in your best interest to \textbf{make every decision carefully and in a way that reflects what you truly prefer.}
\end{enumerate}

\noindent NEXT

\subsection*{CHOICES}

\noindent Each task will involve several decisions between either lotteries or money bundles. Let's talk about these objects first.\\

\noindent NEXT

\subsection*{LOTTERIES}

\noindent A lottery pays different amounts depending on the chance. Here is an example of a lottery:
\begin{center}
50\% chance of \$10\\
50\% chance of \$5
\end{center}

\noindent This lottery pays either \$10, with a 50\% chance, or \$5, with a 50\% chance. For your payment, the computer will randomly draw a number between 1 and 100, where each number is equally likely to be drawn. If the drawn number is less or equal to 50, the lottery will pay \$10. If the drawn number is above 50, the lottery will pay \$5. In other words, the lottery pays either \$5 or \$10 with equal chance.\\

\noindent Depending on the question, the chance could be different: for example, it could be 25\%, 75\%, etc. In some cases, the lottery will involve no chance at all: for example, the option may just pay \$12 for sure.\\

\noindent NEXT\\

\noindent Consider the following lottery:
\begin{center}
    20\% chance of \$7 \\
80\% chance of \$4
\end{center}

\noindent What is the chance that you will receive \$4 if this lottery determines your bonus? (in percentage)\\
 
\noindent What is the chance that you will receive \$7 if this lottery determines your bonus? (in percentage)\\
 
\noindent NEXT\\

\subsection*{MONEY BUNDLES}

\noindent A money bundle pays different amounts of money on different dates. Here is an example of a money bundle:
\begin{center}
\$10 in 7 days
\end{center}

\noindent This bundle pays \$10 in seven days from now. If this bundle is chosen to determine your bonus, then you will receive \$10 seven days after completing the study.\\

\noindent NEXT\\

\noindent Consider the following money bundle:
\begin{center}
\$5 in 12 days
\end{center}

\noindent What will be your bonus payment in 12 days if this bundle determines your bonus?\\

\noindent NEXT

\subsection*{WHAT HAPPENS IN A CHOICE TASK}

\noindent Your decisions in a choice task will be presented in a table in which \textbf{each row represents a separate decision}. In each row, you will choose between the Left and the Right options. Here is an example with 6 rows:

\begin{center}
PRESS TO SEE AN EXAMPLE
\end{center}

\begin{figure}[h!]
    \centering
    \includegraphics[scale=0.5]{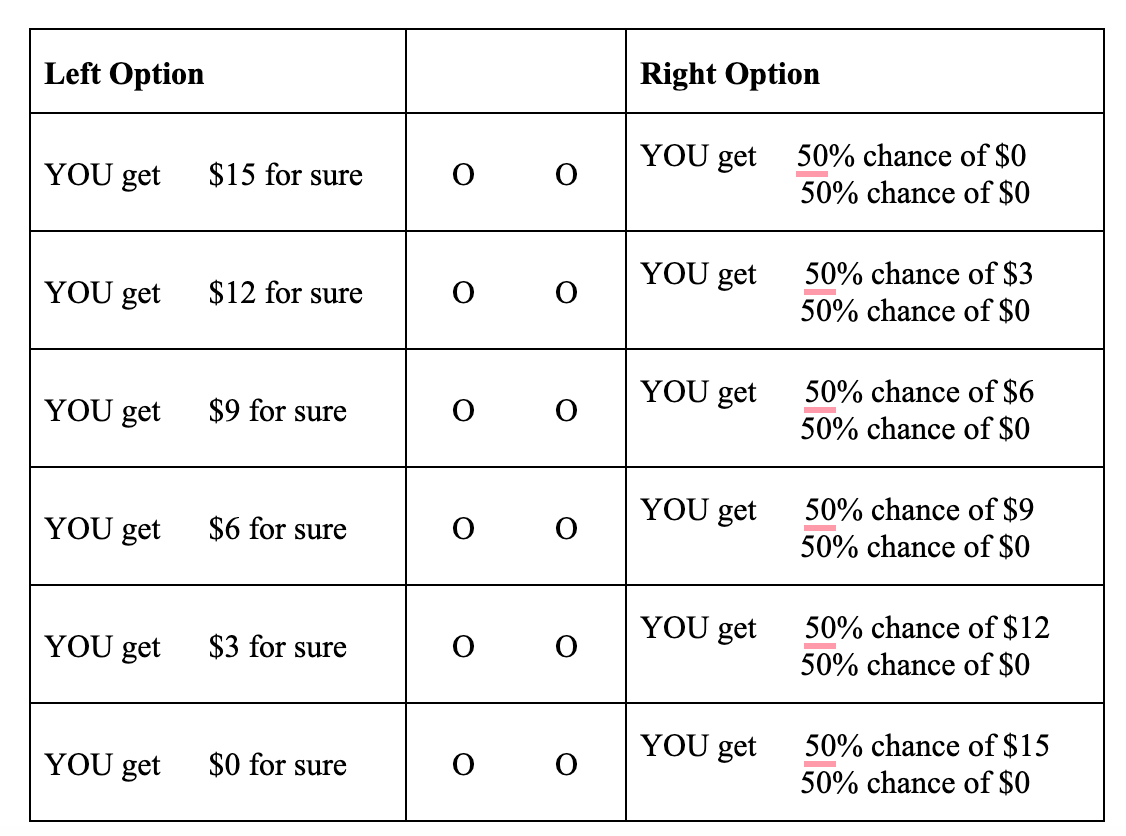}
\end{figure}

\noindent In each row, you must select one of the two options: Left or Right. We will explain on the next screen how you will be making decisions.

\subsection*{HOW TO MAKE DECISIONS}

\noindent To avoid having to make a selection in each row, you need only click once to indicate when one option starts to be preferred to the other. For example, say you click on the Left option in a particular row. Then all rows above that row will populate to indicate that you choose the Left option, while all the rows below the one you clicked will populate to indicate that you choose the Right option. Similarly, if you click on the Right option in a particular row, then all the rows above the one you clicked will indicate that you chose the Left option and all the rows below the one you clicked will indicate that you chose the Right option.\\

\noindent For instance, if you clicked on the Left option in row 3, then it would mean that in rows 1 to 3 you chose the Left option, while in rows 4 to 6 you chose the Right option.\\

\begin{figure}[h!]
    \centering
    \includegraphics[scale=0.4]{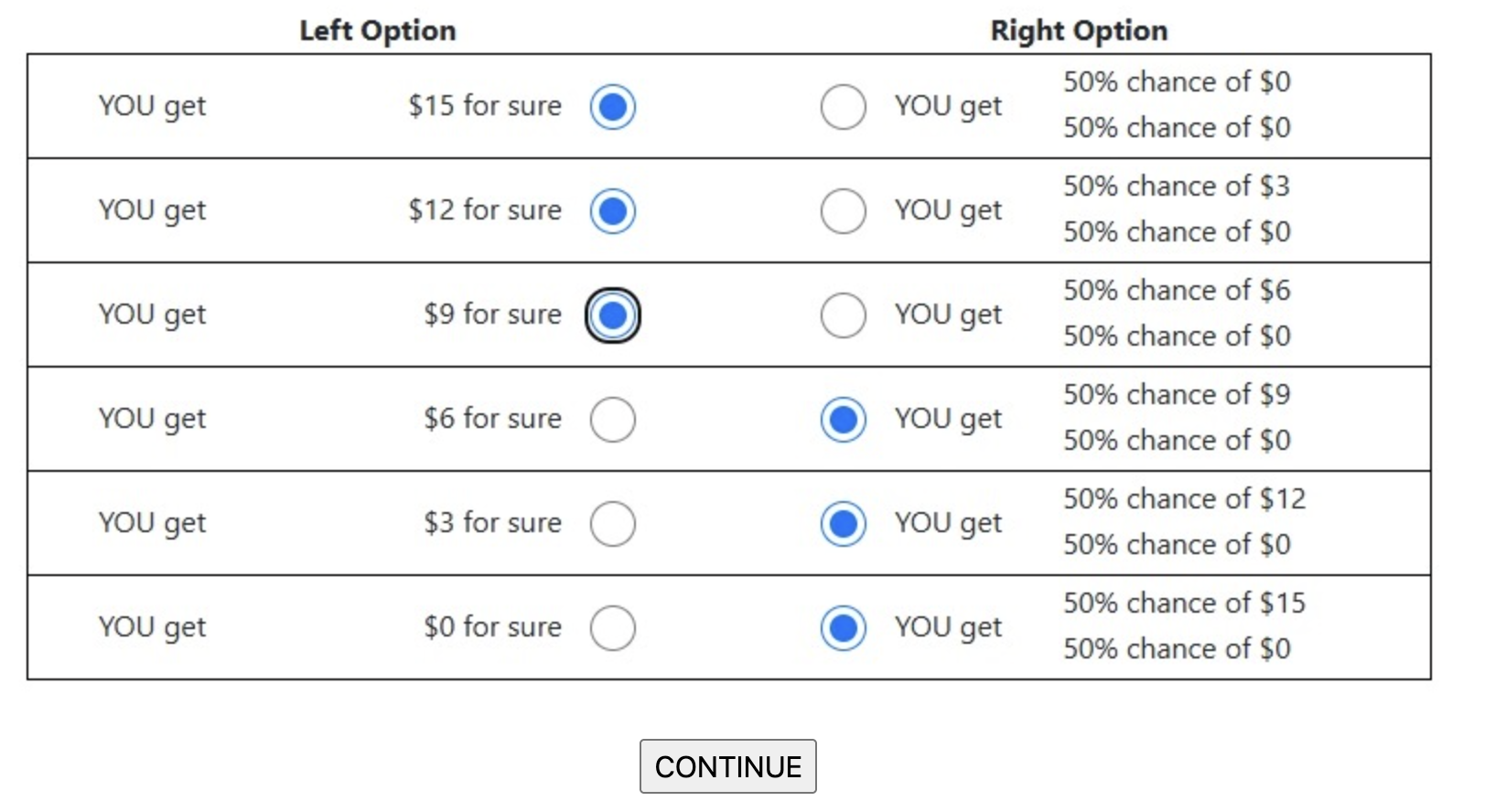}
\end{figure}

\noindent If you want to select the Left option in all rows, then you need to click on the Left option in the very last row.\\

\begin{figure}[h!]
    \centering
    \includegraphics[scale=0.4]{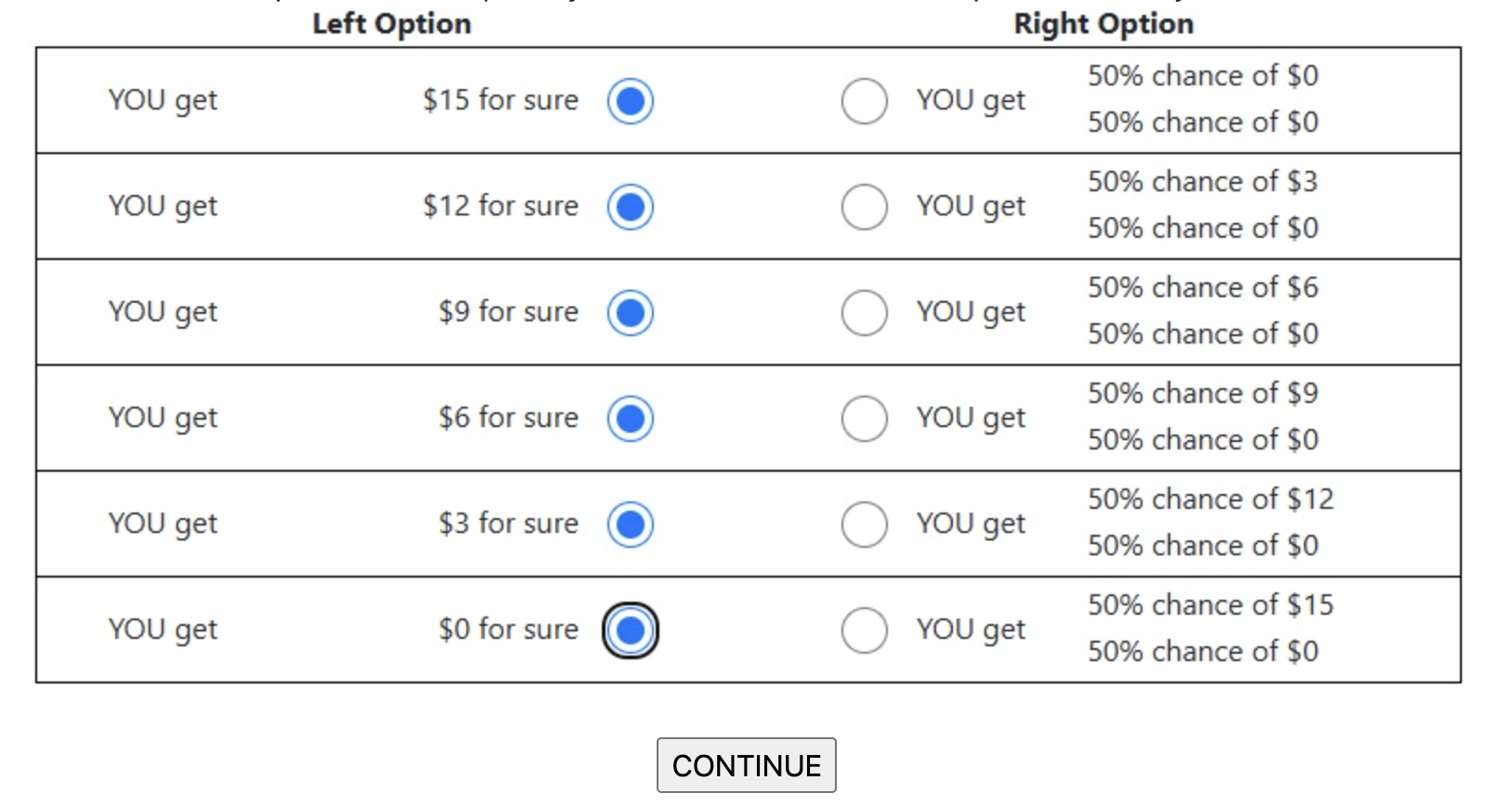}
\end{figure}

\noindent If you want to select the Right option in all rows, then you need to click on the Right option in the very first row.\\

\begin{figure}[h!]
    \centering
    \includegraphics[scale=0.4]{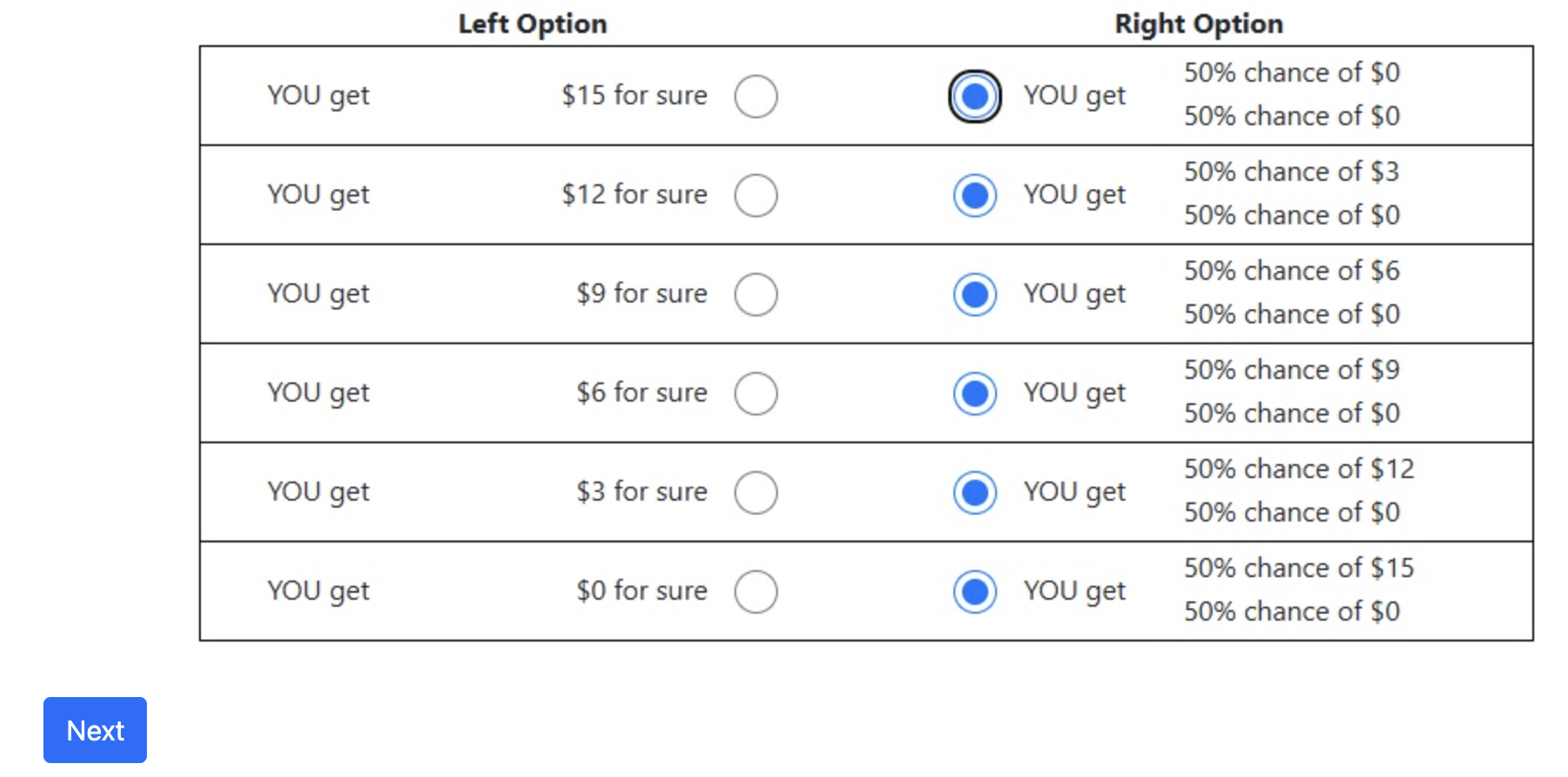}
\end{figure}

\section*{Instructions for Block A}

\subsection*{WHAT HAPPENS IN THE NEXT FOUR ROUNDS}

\noindent In the next four choice tasks, you will face a series of decisions in which you will be choosing between objects YOU might receive. Here is an example of a choice task with 6 rows:

\begin{center}
PRESS TO SEE AN EXAMPLE
\end{center}

\begin{figure}[h!]
    \centering
    \includegraphics[scale=0.5]{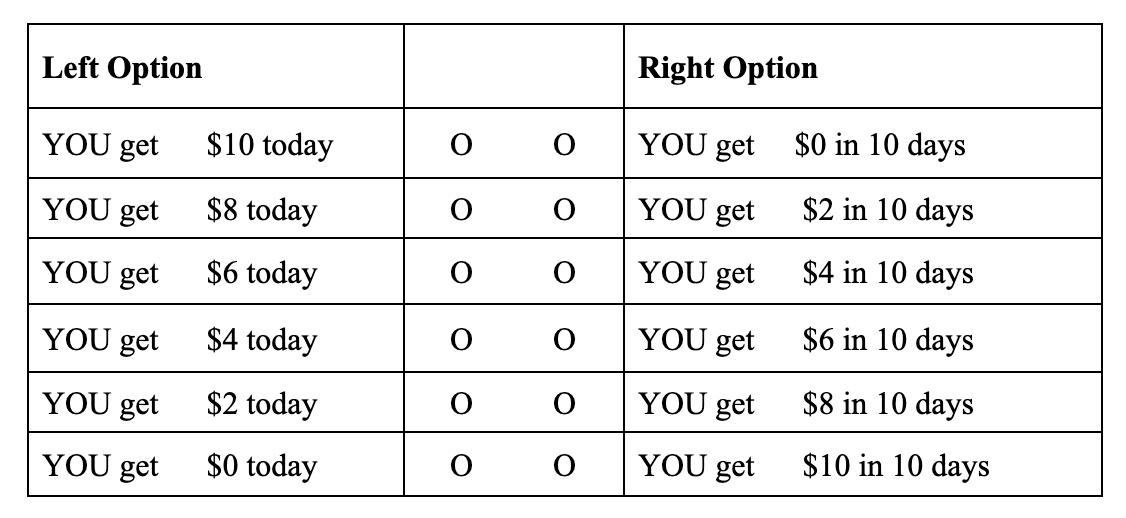}
\end{figure}

\noindent Each row is a separate decision. In each row, you must select one of the two options: Left or Right. Both the Left and the Right options are objects received by YOU. \\

\noindent Note: \textcolor{purple}{The Left option pays less money as we go down the rows}. Contrary to that, \textcolor{blue}{the Right option gets better and pays more money as we go down the rows}.\\

\noindent In some rows, one of the two options pays more than the other. For example, in the first row, the Left option pays \$10 today, while the Right option pays \$0 in 10 days. Another example is the last row: the Left option pays \$0 today, while the Right option pays \$10 in 10 days. \\

\noindent Finally, in some choice tasks, you will be choosing between amounts and lotteries, while in others you will be choosing between money bundles.\\

\noindent NEXT

\subsection*{INTERFACE PRACTICE QUESTION 1:}

\noindent Please select the Left option in the first 5 rows and the Right option in the last row:

\begin{figure}[h!]
    \centering
    \includegraphics[scale=0.5]{screens/screen2.png}
\end{figure}

\noindent NEXT

\subsection*{INTERFACE PRACTICE QUESTION 2:}

\noindent PPlease select the Right option in all rows:

\begin{figure}[h!]
    \centering
    \includegraphics[scale=0.5]{screens/screen2.png}
\end{figure}

\noindent NEXT \\

\noindent Congratulations! You have demonstrated that you understand the next block of rounds.\\

\noindent Please press the NEXT button to continue the study. \textbf{Please make sure you answer every decision carefully and in a way that reflects your true preferences!}

\section*{Instructions for Block B}

\subsection*{WHAT HAPPENS IN THE NEXT EIGHT ROUNDS}

\noindent In the next eight choice tasks, you will be choosing between objects \textcolor{purple}{YOU} might receive and objects that \textcolor{blue}{ANOTHER PROLIFIC WORKER} might receive.\\

\noindent Specifically, after you complete the study, we will recruit another Prolific worker who is similar to you in age and level of education. Let’s call this person \textbf{OTHER}. The OTHER has no active role in our study. They will simply receive the objects that you will choose for them in the next rounds, if one of these rounds is selected for a bonus.\\

\noindent Here is an example of a choice task with 6 rows:
\begin{center}
PRESS TO SEE AN EXAMPLE
\end{center}

\begin{figure}[h!]
    \centering
    \includegraphics[scale=0.5]{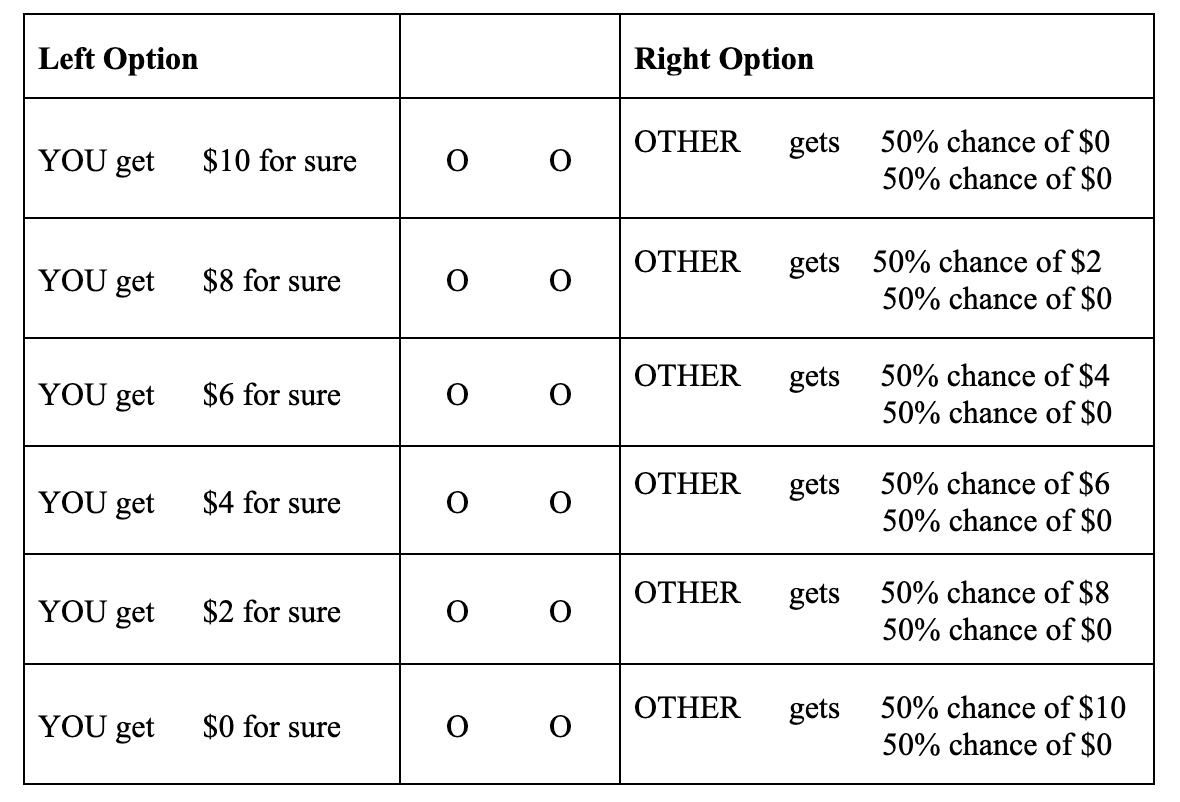}
\end{figure}

\noindent As always, each row is a separate decision. In each row, you choose between Left and Right. In every row (decision), \textcolor{purple}{Left specifies different amounts for YOU}, while \textcolor{blue}{Right is different lotteries for OTHER}. As you go down the rows, amounts for YOU become smaller and smaller (see the Left option), while the lottery for OTHER becomes better and better (see the Right option). \\

\noindent You will be prompted to choose between amounts for YOU and lotteries for OTHER in every single row. For instance, say you choose Left in the first four rows and Right in the 2 remaining rows. Say, also, that row two is selected for a bonus. Then YOU will receive \$8 for sure. If instead the last row is selected for a bonus, then OTHER will receive a lottery which pays \$10 or \$0 with equal chance.\\

\noindent Finally, in some choice tasks, you will be choosing between amounts for YOU and lotteries for OTHER (like in the example above), while in other choice tasks, you will be choosing between amounts for YOU and money bundles for OTHER.\\

\noindent NEXT

\subsection*{INTERFACE PRACTICE QUESTION 1:}

\noindent Please implement the following selection: choose amounts for YOU as long as these amounts are at least \$6 and choose the lottery for OTHER in all other decisions.

\begin{figure}[h!]
    \centering
    \includegraphics[scale=0.5]{screens/screen3.png}
\end{figure}

\noindent NEXT

\subsection*{INTERFACE PRACTICE QUESTION 2:}

\noindent Please implement the following selection: choose the lottery for OTHER as long as the lottery’s highest payment is no less than \$8 and choose the amounts for YOU in all other decisions.

\begin{figure}[h!]
    \centering
    \includegraphics[scale=0.5]{screens/screen3.png}
\end{figure}

\noindent NEXT\\

\noindent Congratulations! You have demonstrated that you understand the next block of rounds.\\

\noindent Please press the NEXT button to continue the study. \textbf{Please make sure you answer every decision carefully and in a way that reflects your true preferences!}

\section*{Instructions for Block C}

\subsection*{WHAT HAPPENS IN THE NEXT EIGHT ROUNDS}

\noindent In the next eight choice tasks, you will be choosing between objects YOU might receive and objects that ANOTHER PROLIFIC WORKER might receive.\\

\noindent Specifically, after you complete the study, we will recruit another Prolific worker who is similar to you in age and level of education. Let’s call this person OTHER. The OTHER has no active role in our study. They will simply receive the objects that you will choose for them in the next rounds, if one of these rounds is selected for a bonus.\\

\noindent Here is an example of a choice task with six rows:

\begin{center}
PRESS TO SEE AN EXAMPLE
\end{center}

\begin{figure}[h!]
    \centering
    \includegraphics[scale=0.5]{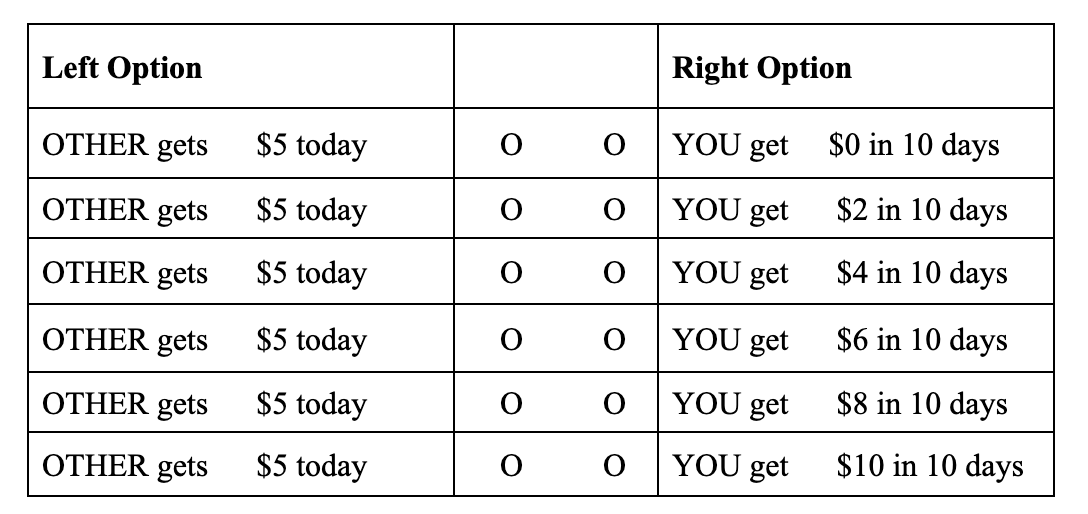}
\end{figure}

\noindent Note: \textcolor{blue}{The objects for OTHER will be displayed on the Left} and \textcolor{purple}{the objects for YOU will be displayed on the Right}. These objects will be either lotteries and sure amounts or money bundles.\\

\noindent Moreover, the Left options, i. e., the objects for OTHER, will be the same in all decisions in all rows of the same choice task, while the Right options, i.e., the objects for YOU, will get better as you go down the rows.\\

\noindent As always, each row is a separate decision. You will be asked to choose either the Left option or the Right option in every single row. \\

\noindent NEXT\\

\subsection*{INTERFACE PRACTICE QUESTION 1:}

\noindent Please implement the following selection: choose the money bundle for OTHER unless the amount for YOU received in 10 days is \$4 or greater.

\begin{figure}[h!]
    \centering
    \includegraphics[scale=0.5]{screens/screen4.png}
\end{figure}

\noindent NEXT\\

\clearpage
\subsection*{INTERFACE PRACTICE QUESTION 2:}

\noindent Please implement the following selection: choose the \$5 today for OTHER in all rows.\\

\begin{figure}[h!]
    \centering
    \includegraphics[scale=0.5]{screens/screen4.png}
\end{figure}

\noindent NEXT\\

\noindent Congratulations! You have demonstrated that you understand the next block of rounds.\\

\noindent Please press the NEXT button to continue the study. \textbf{Please make sure you answer every decision carefully and in a way that reflects your true preferences!}

\section*{Instructions for Block D}

\subsection*{WHAT HAPPENS IN THE NEXT FOUR ROUNDS}

\noindent In the next four choice tasks, you will be choosing between objects that ANOTHER PROLIFIC WORKER might receive.\\

\noindent Specifically, after you complete the study, we will recruit another Prolific worker who is similar to you in age and level of education. Let’s call this person OTHER. The OTHER has no active role in our study. They will simply receive the objects that you will choose for them in the next rounds, if one of these rounds is selected for a bonus.\\

\noindent Here is an example of a choice task with six rows:
\begin{center}
PRESS TO SEE AN EXAMPLE
\end{center}

\begin{figure}[h!]
    \centering
    \includegraphics[scale=0.5]{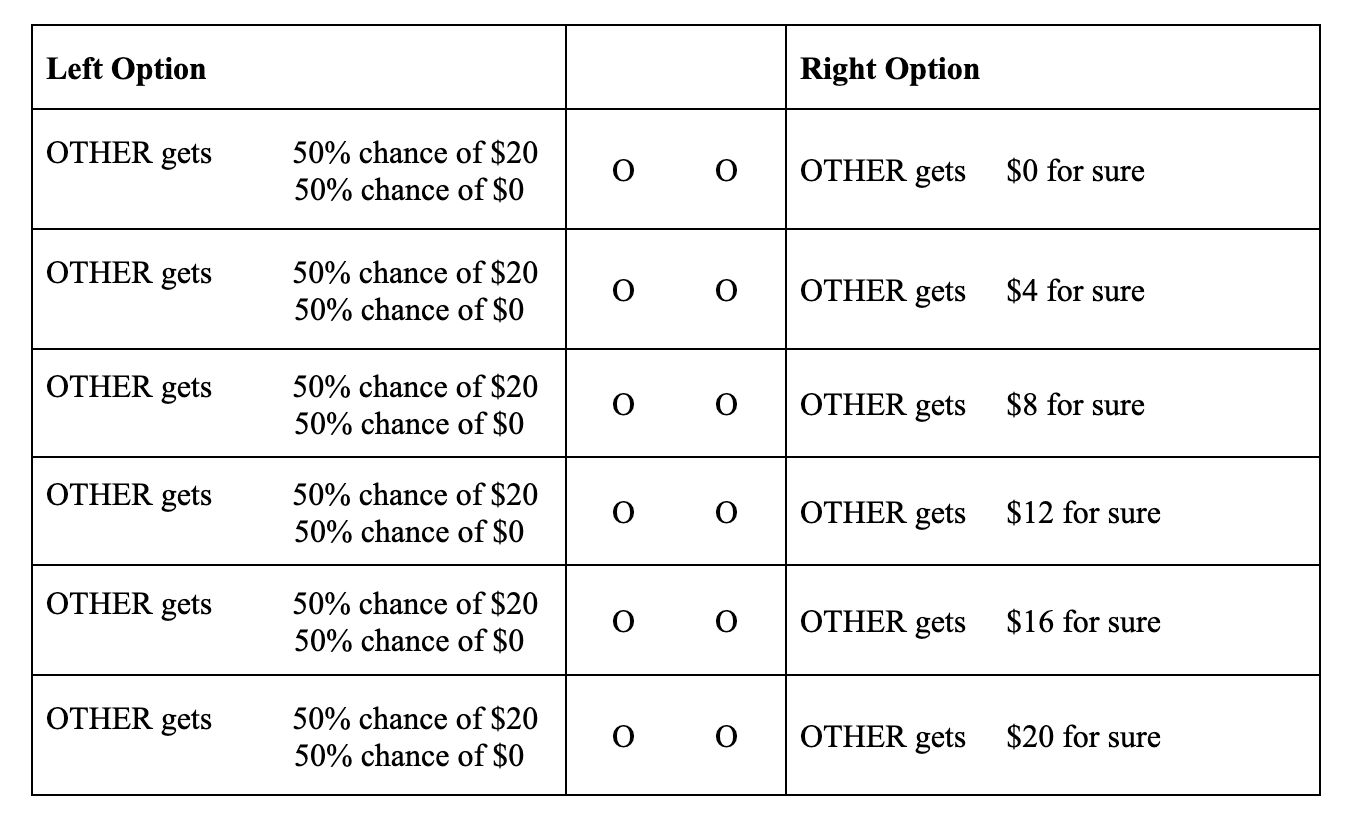}
\end{figure}

\noindent Note: Both, the Left and the Right options list \textcolor{blue}{objects for OTHER}. It will be either lotteries and sure amounts, or money bundles.\\

\noindent Moreover, \textbf{the Left option is the same} in all decisions in all rows of the same choice task, while the \textbf{Right options gets better and better as you go down the rows}.\\

\noindent In some rows, one of the two options pays to the OTHER at least as much and sometimes more than the other. For example, in the first row, the Left option pays \$20 with 50\% chance and \$0 with 50\% chance, while the Right option pays \$0 for sure. Thus, the Right option pays the OTHER at least as much and sometimes more than the Right option. Another example is the last row: the Left option pays \$20 with 50\% chance and \$0 with 50\% chance, while the Right option pays \$20 for sure. Thus, the Right option pays the OTHER at least as much and sometimes more than the Left option.\\

\noindent As always, each row is a separate decision. You will be asked to choose either the Left option or the Right option in every single row. \\

\noindent NEXT\\ 

\clearpage

\subsection*{INTERFACE PRACTICE QUESTION 1:}

\noindent Please implement the following selection: choose the sure amount for the OTHER as long as this amount is at least \$4.

\begin{figure}[h!]
    \centering
    \includegraphics[scale=0.5]{screens/screen5.png}
\end{figure}

\noindent NEXT\\

\subsection*{INTERFACE PRACTICE QUESTION 2:}

\noindent Please implement the following selection: choose the lottery for the OTHER in all rows.

\begin{figure}[h!]
    \centering
    \includegraphics[scale=0.45]{screens/screen5.png}
\end{figure}

\noindent NEXT\\

\clearpage
\bibliographystyle{ecta}
\bibliography{references}

\end{document}